\newcommand{\EW}[1]{\mathbb{E}\left[#1\right]}
\newcommand{\Var}[1]{\mathbb{V}\!ar\left[#1\right]}
\newcommand{\Prob}[1]{\mathbb{P}\left[#1\right]}
\newcommand{\Probc}[2]{\mathbb{P}_{#1}\left[#2\right]}
\newcommand{\eqref}[1]{(\ref{#1})}
\newcommand{\Xspace}{\mathbb{X}}
\newcommand{\sol}{f}
\newcommand{\altsol}{g}
\newcommand{\manifold}{D}
\newcommand{\half}{{\scriptsize{\frac{1}{2}}}}
\newcommand{\N}{\mathbb{N}}	
\newcommand{\R}{\mathbb{R}}	
\newcommand{\paren}[1]{\left( #1 \right)}
\newcommand{\norm}[1]{\left\| #1 \right\|}
\newcommand{\argmin}{\,\mathrm{argmin}\,}
\renewcommand{\ker}{\,\mathrm{ker}\,}
\newcommand{\ran}{\,\mathrm{ran}\,}
\newcommand{\grad}{\,\mathrm{grad}\,}
\newcommand{\Div}{\,\mathrm{div}\,}
\newcommand{\calR}{\mathcal{R}}
\newcommand{\calS}{\mathcal{S}}
\newcommand{\calT}{\mathcal{T}}
\newcommand{\frakB}{\mathfrak{B}}
\newcommand{\uF}{u}
\newcommand{\vTest}{v}
\newcommand{\shift}{\tau}
\newcommand{\mymatrix}[2]{\left(\begin{array}{#1} #2\end{array}\right)}
\newcommand{\KL}{\mathrm{KL}}
\newcommand{\Opg}{F}
\newcommand{\bh}{\mathbf{h}}
\newcommand{\bfX}{\mathbf{X}}
\newcommand{\bfY}{\mathbf{Y}}
\newcommand{\bfW}{\mathbf{W}}
\renewcommand{\bmu}{\boldsymbol{\mu}}
\newcommand{\muhatk}{\widehat\bmu_k}
\newcommand{\solhat}{\widehat\sol}
\newcommand{\solhatk}{\widehat\sol_k}
\newcommand{\solhatkm}{\widehat\sol_{k-1}}
\newtheorem{theorem}{Theorem}
\newtheorem{corollary}[theorem]{Corollary}
\newtheorem{lemma}[theorem]{Lemma}
\newtheorem{proposition}[theorem]{Proposition}
\newtheorem{example}[theorem]{Example}
\theoremstyle{definition}
\theoremstyle{remark}
\newtheorem{remark}{Remark}
\begin{document}

\title[Parameter identification in SDEs]{On parameter identification in stochastic differential equations by penalized maximum likelihood}

\author{Fabian Dunker and Thorsten Hohage}

\address{Institut f\"ur Numerische und Angewandte Mathematik, Georg-August Universit\"at G\"ottingen

Lotzestr. 16--18, 37083 G\"ottingen, Germany}
\ead{dunker@math.uni-goettingen.de}
\begin{abstract}
In this paper we present nonparametric estimators for coefficients in stochastic 
differential equation if the data are described by independent, identically distributed 
random variables. The problem is formulated as  a nonlinear ill-posed operator equation 
with a deterministic forward operator described by the Fokker-Planck equation. 
We derive convergence rates of the risk for penalized maximum likelihood estimators 
with convex penalty terms and for Newton-type methods. The assumptions of our 
general convergence results are verified for estimation of the drift coefficient. 
The advantages of log-likelihood compared to quadratic data fidelity 
terms are demonstrated in Monte-Carlo simulations. 
\end{abstract}


\section{Introduction}\label{sec:intro}
Many dynamical processes in physics, social sciences and economics can be modeled by systems of stochastic differential equations
\begin{equation}\label{eq:sde}
d\bfX_t = \bmu(t, \bfX_t) dt + \sigma(t,\bfX_t) d \bfW_t.
\end{equation}
Here $t \in [0,T]$ with $T > 0$ is interpreted as time, $\bfX_t$ is a family of random variables with values in 
$\mathbb{R}^d$, and $\bfW_t$ is a standard Wiener process in $\R^d$. 
The function $\bmu: [0,T] \times \R^d  \to \R^d$ is called drift coefficient while 
$\sigma : [0,T] \times \R^d \to \R^{d\times d}$ is the volatility or diffusion.
Observations of the process give values of one or more paths $(\bfX_t)_{t \ge 0}$ at one or many times $t$. 
In many applications there is an interest to estimate the drift or the diffusion either non-parametrically or parametrically to gain a better understanding of the modeled process. 

In this paper we consider the particular case where $\bmu$ and $\sigma$ are independent of $t$, $\sigma$ is known while $\bmu$ should be estimated. Let us describe two kinds of observations suitable for our approach: 
\begin{enumerate}
\item An ensemble of independent paths $\bfX_t^{(i)}$, $i=1,\dots,n$ is observed at a fixed time $t=T$. I.e.\ the observations are the random variables $\bfY_i=\bfX_T^{(i)}$. The starting points of the paths $\bfX_0^{(i)}$ are assumed to be sampled from a known distribution $u_0$ 

\item We observe only one path of a strictly stationary, ergodic process at equidistant times. 
I.e.\ our observations are $\bfY_i= \bfX_{(i+i_0)\Delta t}$ for $i=1,\dots,n$ and $i_0>0$.  
\end{enumerate}

Our approach to the problem is based on the Fokker-Planck equation, also called forward Kolmogorov equation. 
Assume $\bfX_t$ has a sufficiently smooth density $\uF(t,\cdot)$ for all 
$t \in [0,T]$.%
Then \eqref{eq:sde} holds true if and only if $\uF$ solves 
the initial value problem 
\begin{equation}\label{eq:fokker-planck}
\eqalign{
\frac{\partial}{\partial t} \uF = \Div \left( - \bmu \uF + \half  \sigma\sigma^{\top} \grad \uF \right)\\
\uF(0,\cdot) = \uF_0
}
\end{equation}
(see e.g.\ \cite{Risken:89}). Hence, we can define the deterministic 
coefficient-to-solution operator $F(\bmu):=\uF(T,\cdot)$. This operator is nonlinear.

In case of an ergodic process with $\bmu$, $\sigma$ not depending on 
$t$, solutions to eq.~\eqref{eq:fokker-planck} tend to a 
stationary solution as $t\to\infty$ which solves the elliptic equation
\begin{equation}\label{eq:fokker-planck_stationary_unbounded}
\eqalign{
0 = \Div\left(-\bmu \uF + \half   \sigma\sigma^{\top} \grad \uF\right)\\
\int \uF(x) dx = 1.
}
\end{equation}
Here the coefficient-to-solution operator is defined by $F(\bmu) := \uF$. The operator $F$ and its 
properties will be discussed in Section \ref{sec:Fokker-Planck}. 

We will derive convergence results for general operators $F$ with values in a set of probability densities. The unknown of the inverse problem will be denoted by $\sol$ in this general case. In the setting above 
we have $\sol=\bmu$, but in other applications $\sol=\sigma$ or 
$\sol=(\bmu,\sigma)$. If parametric estimation is preferred over non-parametric estimation, $\sol$ can be a parameter in a model of 
$\bmu$ or $\sigma$. 
Suppose that $\sol^{\dag}$ is the exact solution and $\uF^{\dag}:=F(\sol^{\dag})$ 
the corresponding probability density. 
We assume that the observed data are described by independent random variables 
$\bfY_1,\dots,\bfY_n$ each of which has probability density $\uF^{\dagger}$.  
Note that equidistant observations $\bfY_i= {\bf X}_{(i+i_0)\Delta t}$ of one path are actually 
not independent. Therefore, our results apply immediately  
only to the first scenario where an ensemble of independent paths is observed. 
In the second scenario additional information is contained in the order of the 
data $\bfY_i$ which will be neglected here. This is justified 
if $\Delta_t$ is so large that the dependence of $\bfY_i$ and $\bfY_{i+1}$ is neglectible 
or if no information on the order is available. 

Our estimator follows the idea to seek an estimator $\solhat$ which maximizes the likelihood of the given observations $\bfY_i=y_i$. 
It is convenient to describes these observations by the empirical measure 
\begin{equation}\label{eq:empirical_meas}
\Phi_n := \frac{1}{n}\sum_{i=1}^n \delta_{y_i}.
\end{equation} 
Since $\Probc{\uF}{y_1,\dots,y_n}=\prod_{i=1}^n \uF(y_i)$, the negative log-likelihood is given by 
\begin{equation}\label{eq:S0}
\calS_0(\Phi_n, \uF) 
= -\frac{1}{n}\ln \Probc{\uF}{y_1,\dots,y_n}
= - \frac{1}{n}\sum_{i=1}^n \ln \uF(y_i) 
= - \int \ln(\uF)\, d\Phi_n.
\end{equation}
Due to ill-posedness a simple maximum likelihood estimator, i.e.\ a minimizer of $\calS_0\left(\Phi_n, F(\sol)\right)$ 
over $\sol$ in some convex set $\mathfrak{B}$, is unstable. Therefore, we have to regularize. 
In the (generalized) Tikhonov regularization one adds a penalty 
term $\calR:\frakB \to \R\cup\{\infty\}$, which we assume to be
convex, lower semi-continuous, and not identically $\infty$.  
It is weighted by a regularization parameter $\alpha>0$: 
\begin{equation}\label{eq:Tikh}
\solhat_{\alpha} \in \argmin_{\sol\in\frakB}\left[
\calS\left(\Phi_n; F(\sol)\right) + \alpha \calR(\sol) \right].
\end{equation}
Due to the non-linearity of $F$ this is in general a non-convex minimization 
problem even though $\calS\left(\Phi_n; \cdot\right)$ 
and $\calR$ are convex. An alternative is to locally approximate 
$F$ around a current iterate by its Fr\'echet derivative 
$F'[\solhatk]$. This yields the iteratively 
regularized Newton method
\begin{equation}\label{eq:Gauss-Newton}
\solhatk \in \argmin_{\sol\in \frakB}
\left[\calS\left(\Phi_n; F'[\solhatkm](\sol - \solhatkm) 
+ F(\solhatkm)\right) + \alpha_k \calR(\sol) \right].
\end{equation}
Here $(\alpha_k)$ is a sequence of positive regularization parameters converging monotonically to $0$ for increasing $k$ such that 
$\alpha_k/\alpha_{k+1}$ remains bounded. To assure well-posedness 
of these optimization problems and to analyze convergence, it is 
often necessary to ''regularize'' the data fidelity term 
$\calS$. This is of particular importance when $\uF$ is negative on 
a set of positive measure which implies $\calS(\Phi_n,\uF)= \infty$. 
A further discussion is contained in Section~\ref{sec:general_convergence}. 

%


All known convergence rate results for regularization methods involving $F'$ under 
source conditions weaker than $\sol^{\dag}\in \ran(F'[\sol^{\dagger}]^*)$ 
require additional assumptions on $F'$ such as the tangential cone condition
\begin{equation}\label{eq:L2tcc}
\|F(\altsol) - F(\sol) - F'[\sol](\altsol-\sol)\|_{L^2} 
\leq \eta \|F(\altsol) - F(\sol)\|_{L^2}.
\end{equation}
For KL-type data fidelity terms a related formulation \eqref{eq:KL-tcc} suggested 
recently in \cite{HW:13} is required. For parameter identification problems 
for which $D(F)$ and $\ran(F)$ are function spaces over different domains these 
conditions are typically very difficult to verify, but if the domains coincide 
the $L^2$ tangential cone condition has been shown for a number of problems 
(see e.g.\ \cite{HNS:95,DZ:13}). To the best of our knowledge for drift estimation 
in the stationary Fokker-Planck equation \eqref{eq:fokker-planck_stationary_unbounded} 
both the $L^2$-version and in particular the KL-version of the tangential cone 
condition are unknown so far, and we will prove them below. 


The modeling by stochastic differential equations became standard in 
financial econometrics since the work of Black\& Scholes \cite{BS:73}. 
The parametric and non-parametric estimation of drift and diffusion in ergodic models has 
attracted a lot of interest since then. We just mention the text book by 
Kutoyants \cite{Kutoyants:04} and references therein. More recent works on nonparametric 
estimation of the drift are those by Hoffmann \cite{Hoffmann:99} using wavelets, 
Spokoiny \cite{Spokoiny:00} using kernel methods, Gobet, Hoffmann \& Rei{\ss} using 
wavelet estimation of an eigenvalue-eigenfunction pair of the transition operator, 
Comte, Genon-Catalot \& Rozenholc 
\cite{CGCR:07} using penalized least squares, 
Schmisser \cite{Schmisser:13} applying penalized least squares to high 
dimensional problems, Papaspiliopoulos et al.~\cite{PPRS:12}, Pokern, 
Stuart \& van Zanten \cite{PSZ:12} using Bayesian methods. 
A parametric estimator related to our approach was developed by Hurn, Jeismann 
\& Lindsay \cite{HJL:07}. They propose a maximum likelihood estimator 
which relies on the computation of \eqref{eq:fokker-planck_stationary} 
by finite elements. Due to a parametric model for $\bmu$ their 
problem is not ill-posed. Furthermore, we mention 
Cr{\'e}pey \cite{Crepey:031,Crepey:032}, Egger\& Engl 
\cite{EggerEngl:05} and De Cezaro, Scherzer \& Zubelli \cite{CSZ:12} 
for nonparametric volatility estimation using partial differential 
equations.

We will show convergence in expectation results with rates as 
$n\to\infty$ both for generalized Tikhonov regularization \eqref{eq:Tikh} 
and the iteratively regularized Newton method \eqref{eq:Gauss-Newton} 
by adapting corresponding results for inverse problems with Poisson 
data in \cite{HW:13,WH:12}. Here we make essential use of a version 
of Talagrand's concentration inequality due to Massart \cite{massart:00}. 

The iteratively regularized Gau{\ss}-Newton method with quadratic 
penalty and quadratic data fidelity term was suggested by 
Bakushinski{\u\i} \cite{bakush:92} and further analyzed by Blaschke, 
Neubauer \& Scherzer \cite{BNS:97} and Hohage \cite{hohage:97} for low 
order H\"older or logarithmic source conditions, respectively. Further 
references can be found in the monographs of Bakushinski{\u\i} \&
Kokurin \cite{BK:04b} and Kaltenbacher, Neubauer \& Scherzer 
\cite{KNS:08}. Regularization with general convex penalty terms 
have been recently investigated in a number of papers. 
We just mention Eggermont \cite{eggermont:93}, Burger \& Osher 
\cite{BO:04}, Resmerita \cite{resmerita:05}, Hofmann et al.\ 
\cite{HKPS:08}, and Scherzer et al. \cite{scherzer_etal:09}.  
Regularization methods for linear ill-posed problems with general data 
fidelity term like the log likelihood functional $\calS$ or the Kullback-Leibler 
divergence have been studied by Resmerita, Anderssen \cite{RA:07} and by 
Benning, Burger \cite{BB:11}. Linear and 
nonlinear Tikhonov regularization with general data fidelity terms has 
been investigated by Flemming \cite{Flemming:10}. 

The remainder of this paper is organized as follows: 
In the next section we present some  
properties of the Fokker-Planck equation and prove a tangential cone 
condition for the corresponding forward operator $F$.  
In Section~\ref{sec:general_convergence} general convergence rates results 
for variational regularization methods with Kullback-Leibler-type 
data fidelity and convex penalty term are presented. 
These results are applied to our 
estimator of the drift in Section~\ref{sec:convergence_sde}. 
Results of numerical simulations are shown in Section~\ref{sec:numerics} 
before we end this paper with some conclusions.

\section{Fokker-Planck equation}\label{sec:Fokker-Planck}

In this section we collect some properties of the stationary Fokker-Planck equation and prove the $L^2$ tangential cone condition for the corresponding operator $F$. We consider this equation on a bounded Lipschitz domain 
$\manifold\subset\mathbb{R}^d$ with the no-flux boundary condition. I.e. in terms of probability densities no probability mass enters or leaves through the boundary. It is the natural boundary condition for the Fokker-Planck equation:
\begin{equation}\label{eq:fokker-planck_stationary}
\eqalign{
\Div\left(-\bmu \uF + \half   \sigma\sigma^{\top} \grad \uF\right)
=0\qquad \mbox{in }\manifold\\
- \uF  (\bmu\cdot \mathbf{n})
+\half\left(\sigma\sigma^{\top} \grad \uF\right) \cdot \mathbf{n}
= 0 \qquad \mbox{on} \; \partial\manifold\\
\int_{\manifold} \uF(x) dx = 1.
}
\end{equation}
We assume that $\bmu\in L^{\infty}(\manifold,\mathbb{R}^d)$ and 
$\sigma\in L^{\infty}(\manifold)^{d\times d}$ with well-defined 
$L^{\infty}$ traces on $\partial D$ which appear in the 
boundary condition. Moreover, we assume  
that there exists a constant $C_\sigma > 0$ such that 
\begin{eqnarray}
\label{eq:unifom_elliptic}
&&|\sigma(x)^{\top}\xi|_2 \ge C_\sigma |\xi|_2 \quad \mbox{for all } \xi \in \R^d, \mbox{ and all } x \in \overline{\manifold}. 
\end{eqnarray}
Let us comment on the natural boundary condition of the Fokker-Planck equation:
\begin{itemize}
\item In case $d=1$ we can assume w.l.o.g.\ that 
$\manifold= (-1,1)$. Extend $\bmu$ by $\bmu(x):= \bmu(1)$ and 
$\bmu(-x):=\bmu(-1)$ for $x>1$ and similarly for $\sigma$. 
Since the constant coefficient differential equation 
$-\bmu \uF'+\frac{\sigma^2}{2}\uF''=0$ with $\bmu\neq 0$ has the linearly independent solutions $1$ and $\exp\paren{\frac{2\bmu}{\sigma^2}x}$, 
the Fokker-Planck equation on $\mathbb{R}$ has an integrable solution 
if and only if $\bmu(1)<0$ and $\bmu(-1)>0$. In this case every integrable 
solution satisfies
\[\fl
\uF(x) =\uF(1)\exp\paren{\frac{2\bmu(1)}{\sigma(1)^2}(x-1)},\quad 
\uF(-x) =\uF(-1)\exp\paren{\frac{2\bmu(-1)}{\sigma(-1)^2}(1-x)},\quad 
x\geq 1.
\]
Therefore, these solutions satisfy the boundary condition in \eqref{eq:fokker-planck_stationary}. Hence, the restrictions of solutions to 
\eqref{eq:fokker-planck_stationary_unbounded} restricted to $\manifold=(-1,1)$ 
are solutions to \eqref{eq:fokker-planck_stationary} up to a scaling factor, 
i.e.\ the boundary condition is an exact transparent boundary condition. 
This is how the boundary condition will be interpreted in our numerical 
experiments. 
\item For $d>1$ 
exact transparent boundary conditions 
are always non-local. Since the boundary condition in 
\eqref{eq:fokker-planck_stationary} is local, we may at best hope for 
convergence to a solution of the Fokker-Planck equation in $\mathbb{R}^d$ 
as the size of $\manifold$ tends to $\infty$. 
\item In other applications, e.g.\ diffusion in biological cells 
the solution paths $\bfX_t$ are naturally contained in a subdomain $\manifold$ of 
$\mathbb{R}^d$. In this case the behavior at the boundary has 
to be modeled separately. E.g.\ when a path hits the boundary, it 
may be reflected in a certain way with a certain probability and otherwise 
destroyed. As discussed in \cite{SSOH:08,Schuss:10} and references 
therein, the behavior of the probability densities at the boundary may
be rather complex involving boundary layers, but no-flux boundary 
conditions often appear as limiting model. 
\end{itemize}

The weak formulation of the elliptic problem \eqref{eq:fokker-planck_stationary} is to find $\uF \in H^1(\manifold)$
such that
\begin{equation}\label{eq:stationary_weak}
\int_\manifold \uF dx=1,\qquad a_{\bmu}(\uF,\vTest) =0 
\quad \mbox{for all } \vTest \in H^1(\manifold) 
\end{equation}
where
\[
a_{\bmu}(\uF,\vTest):=\int_\manifold \paren{-\bmu \uF \cdot \grad\vTest + \half \sigma\sigma^{\top} \grad\uF \cdot \grad\vTest }dx.
\]
Let $L_{\bmu}: H^1(\manifold) \to H^{-1}_0(\manifold)$ denote the operator associated to $a_{\bmu}$, i.e.\ 
$\langle L_{\bmu}\uF,\vTest\rangle = a_{\bmu}(\uF,\vTest)$ for all $\uF,\vTest\in H^1(\manifold)$. 
It was proven by Droniou and V{\'a}zquez \cite{DV:10} that every function in the kernel of $L_{\bmu}$ is either a.e.\ positive, a.e.\ negative, or a.e.\ $0$. Therefore, the kernel is either trivial or one-dimensional. 
For the convenience of the reader we collect some further properties of $L_{\bmu}$ all of which are more or less explicitly 
contained in \cite{DV:10}. 

\begin{lemma}
Assume \eqref{eq:unifom_elliptic} for $\sigma$ and let $\bmu \in L^\infty(\manifold,\mathbb{R}^d)$. 
\begin{enumerate}
\item
The following G\aa rding inequality holds with $\gamma > \|\bmu\|_\infty^2/(2C_\sigma)$ and 
$ 0 < c < \min\left\{\gamma - \frac{\|\bmu\|_\infty^2}{2C_\sigma}, \frac{C_\sigma}{2}-\frac{\|\bmu\|_\infty^2}{4\gamma }\right\}$
\[
a_{\bmu}(\uF, \uF) + \gamma \|\uF\|_{L^2}^2 \geq c \|\uF\|_{H^1}^2,\qquad \uF\in H^1(\manifold). 
\]
\item Eq.~\eqref{eq:stationary_weak} has a unique solution. 
\item Let 
$H^1_{\diamond}(\manifold) := \{\uF \in H^1(\manifold) | \int \uF dx = 0 \}$, let 
$\tilde{a}_{\bmu}:H^1_{\diamond}(\manifold)\times H^1_{\diamond}(\manifold)\to \R$ denote the restriction of $a_{\bmu}$ to $H^1_{\diamond}(\manifold)$, 
and let $\tilde{L}_{\bmu}:H^1_{\diamond}(\manifold)\to H^1_{\diamond}(\manifold)^*$ denote the operator associated to $\tilde{a}_{\bmu}$. 
Then $\tilde{L}_{\bmu}$ 
is bijective and has a bounded inverse. 
\end{enumerate}
\end{lemma}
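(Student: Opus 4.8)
The plan is to treat the diffusion and drift contributions to $a_{\bmu}(\uF,\uF)$ separately. For the diffusion term I would use the ellipticity assumption \eqref{eq:unifom_elliptic} pointwise: since $\sigma\sigma^{\top}\grad\uF\cdot\grad\uF=|\sigma^{\top}\grad\uF|_2^2$, the lower bound $|\sigma^{\top}\xi|_2\ge C_\sigma|\xi|_2$ gives $\half\int_\manifold \sigma\sigma^{\top}\grad\uF\cdot\grad\uF\,dx\ge \frac{C_\sigma}{2}\norm{\grad\uF}_{L^2}^2$ (up to the usual squaring of the constant). For the drift term I would estimate $\left|\int_\manifold \bmu\uF\cdot\grad\uF\,dx\right|\le \norm{\bmu}_\infty\norm{\uF}_{L^2}\norm{\grad\uF}_{L^2}$ and apply Young's inequality $ab\le \frac{1}{2\delta}a^2+\frac{\delta}{2}b^2$ with $a=\norm{\bmu}_\infty\norm{\uF}_{L^2}$ and $b=\norm{\grad\uF}_{L^2}$. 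Adding $\gamma\norm{\uF}_{L^2}^2$ and optimizing the free parameter $\delta$ then yields a positive-definite quadratic form in $(\norm{\uF}_{L^2},\norm{\grad\uF}_{L^2})$ precisely when $\gamma>\norm{\bmu}_\infty^2/(2C_\sigma)$, with a coercivity constant of the stated form; this step is entirely routine.

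\textbf{Part (ii).} A solution of \eqref{eq:stationary_weak} is precisely a normalized element of $\ker L_{\bmu}$, so the task is to show that $\ker L_{\bmu}$ is one-dimensional and spanned by a function of fixed sign. First I would use (i): the G\aa rding inequality makes the shifted form $a_{\bmu}(\cdot,\cdot)+\gamma\langle\cdot,\cdot\rangle_{L^2}$ coercive, so by Lax--Milgram the associated operator $L_{\bmu}+\gamma J$ is an isomorphism $H^1(\manifold)\to H^1(\manifold)^*$, where $J$ is the map induced by the $L^2$ inner product. Since $J$ is compact by Rellich's theorem, $L_{\bmu}$ is Fredholm of index $0$. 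Testing the form with the constant function gives $a_{\bmu}(\uF,1)=0$ for every $\uF$, so $1$ lies in the kernel of the transposed form and hence $\dim\ker L_{\bmu}=\dim\mathrm{coker}\,L_{\bmu}\ge 1$. The cited theorem of Droniou and V\'azquez \cite{DV:10} forces every element of $\ker L_{\bmu}$ to be a.e.\ of one sign, so the kernel is exactly one-dimensional, spanned by some a.e.\ positive $\uF_0$ with $\int_\manifold \uF_0\,dx>0$. Normalizing $\uF_0$ to have integral $1$ gives existence, and one-dimensionality gives uniqueness.

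\textbf{Part (iii).} For the restriction $\tilde{L}_{\bmu}$ I would reuse the Fredholm structure: the G\aa rding inequality of (i) holds in particular on $H^1_{\diamond}(\manifold)$, so the shifted form is coercive there and, via Lax--Milgram and the Rellich compactness of $H^1_{\diamond}(\manifold)\hookrightarrow L^2(\manifold)$, the operator $\tilde{L}_{\bmu}:H^1_{\diamond}(\manifold)\to H^1_{\diamond}(\manifold)^*$ is Fredholm of index $0$. It therefore suffices to prove injectivity. Here I would exploit that $a_{\bmu}(\uF,c)=0$ for any constant $c$: if $\uF\in H^1_{\diamond}(\manifold)$ satisfies $\tilde{a}_{\bmu}(\uF,\vTest)=0$ for all $\vTest\in H^1_{\diamond}(\manifold)$, then splitting an arbitrary $\vTest\in H^1(\manifold)$ into its mean and its mean-zero part shows $a_{\bmu}(\uF,\vTest)=0$ for \emph{all} $\vTest\in H^1(\manifold)$, i.e.\ $\uF\in\ker L_{\bmu}=\mathrm{span}\{\uF_0\}$. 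Since $\uF_0$ has nonzero integral while $\uF$ has zero integral, $\uF=0$. Thus $\tilde{L}_{\bmu}$ is injective, hence bijective by index $0$, and the bounded inverse theorem yields boundedness of $\tilde{L}_{\bmu}^{-1}$.

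The only genuinely non-elementary ingredient is the sign-definiteness of $\ker L_{\bmu}$, which I take from \cite{DV:10}; everything else is a combination of the G\aa rding inequality of (i), standard Fredholm theory, and the observation that constants are annihilated in the second slot of $a_{\bmu}(\cdot,\cdot)$. Accordingly I expect the main care to be needed not in any single estimate but in the bookkeeping of the duality between $H^1_{\diamond}(\manifold)$ and its dual and in the reduction of mean-zero test functions to general ones, which is what makes the passage between $\tilde{L}_{\bmu}$ and $L_{\bmu}$ clean.
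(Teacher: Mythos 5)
Your proof is correct and follows essentially the same route as the paper: the G\aa rding inequality via Young's inequality, Fredholm index zero from compact perturbation of a coercive form, the Droniou--V\'azquez sign result to pin down $\ker L_{\bmu}$, and the observation that constants are annihilated in the second argument of $a_{\bmu}$. The only cosmetic difference is in part (iii), where you run the Fredholm alternative directly on $H^1_{\diamond}(\manifold)$ and reduce to injectivity, while the paper instead identifies $\ran(L_{\bmu})=\{1\}^{\perp}=H^1_{\diamond}(\manifold)^*$ and deduces surjectivity of $\tilde{L}_{\bmu}$ from the closed-range computation.
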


\begin{proof}
1) We have
\begin{eqnarray*}
a_{\bmu}(\uF, \uF) +\gamma\|\uF\|_{L^2}^2 
&= \int_\manifold - \bmu \uF \grad \uF + \half  \left|\sigma^{\top}\grad \uF\right|_2^2 dx +\gamma\|\uF\|_{L^2}^2\\
&\geq - \|\bmu\|_\infty \|\uF\|_{L^2} \|\grad \uF\|_{L^2}  + \frac{C_\sigma}{2} \|\grad \uF\|_{L^2}^2 +\gamma\|\uF\|_{L^2}^2\\
&\geq \left(\gamma -\frac{\|\bmu\|_\infty^2}{4\varepsilon}\right)\|\uF\|_{L^2}^2 + \left(\frac{C_\sigma}{2}- \varepsilon\right) \|\grad \uF\|_{L^2}^2.
\end{eqnarray*}
The last step uses Young's inequality $ab\leq a^2/(4\epsilon)+\epsilon b^2$, which holds for $a,b\geq 0$ and $\varepsilon > 0$. 
Choosing $\varepsilon < C_\sigma/2$ and $\gamma > \|\bmu\|_\infty^2/(4\varepsilon)$ gives the G\aa rding inequality.\\
2) As a consequence of part 1, $L_{\bmu}$ is a Fredholm operator of index $0$, i.e.\ 
$\dim(\ker(L_{\bmu}))=\dim(\ran(L_{\bmu})^{\perp})$ (where orthogonality is understood with respect to the dual pairing 
of $H^1(\manifold)$ and $H^{-1}_0(\manifold)$) and $\ran(L_{\bmu})$ is closed.  
As argued above, $\dim(\ker(L_{\bmu}))\in\{0,1\}$. As $a_{\bmu}(\uF,1)=0$ for all $\uF\in H^1(\manifold)$, i.e.
$1\in \ran(L_{\bmu})^{\perp}$, we have $\dim(\ker(L_{\bmu}))=1$. Since the elements of $\ker(L_{\bmu})$ are 
positive a.e.\ or negative a.e., there exists a unique $\uF\in\ker(L_{\bmu})$ satisfying $\int_{\manifold}\uF\,dx=1$. \\
3) We also have $\dim(\ran(L_{\bmu})^{\perp})=1$, so by the proof of part 2 $\ran(L_{\bmu})=\{1\}^{\perp} = H^{1}_{\diamond}(\manifold)^*$ as 
$\ran(L_{\bmu})$ is closed. By the characterization of $\ker(L_{\bmu})$, the operator $L_{\bmu}$ is injective on 
$H^1_{\diamond}(\manifold)$. Moreover, $\ran(\tilde{L}_{\bmu})=\ran(L_{\bmu})$ as 
$H^1_{\diamond}(\manifold)\oplus\mathrm{span}\{1\}=H^1(\manifold)$, so $\tilde{L}_{\bmu}$ is surjective. 
Boundedness of $\tilde{L}_{\bmu}^{-1}$ follows from the open mapping theorem. 
\end{proof}

The differentiability of $F$ and the tangential cone condition stated in the next theorem
are crucial for the Gau{\ss}-Newton method.

\begin{theorem}\label{the:elliptic}
The operator $F : L^\infty(\manifold,\mathbb{R}^d) \to L^2(\manifold)$ is 
Fr\'echet differentiable, and 
$F'[\bmu]\bh=\uF_{\bmu,\bh}'$ where $\uF_{\bmu,\bh}'\in H^1_{\diamond}(\manifold)$ is the unique solution to 
the variational problem
\begin{equation}\label{eq:stationary_der}
\tilde{a}_{\bmu}(\uF'_{\bmu,\bh},\vTest) =  \int_{\manifold} F(\bmu) \;\bh \cdot \grad\vTest\,dx,\qquad \vTest\in H^1_{\diamond}(\manifold).
\end{equation}
Furthermore, the strong tangential cone condition holds true:
\begin{equation}\label{eq:tang_cone_Fokker}
\|F(\bmu+\bh) - F(\bmu) - F'[\bmu]\bh\|_{L^2} \leq \tilde{C}_{\bmu} \|\bh\|_\infty \|F(\bmu + \bh) - F(\bmu)\|_{L^2}
\end{equation}
for all $\bmu,\bh\in L^{\infty}(\manifold,\mathbb{R}^d)$ with 
$\tilde C_{\bmu}:=\|\tilde{L}_{\bmu}^{-1}\|$.
\end{theorem}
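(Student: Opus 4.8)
The plan is to exploit the affine dependence of the bilinear form on the drift together with the bounded inverse $\tilde{L}_{\bmu}^{-1}$ supplied by part~3 of the Lemma. The starting observation is the algebraic identity $a_{\bmu+\bh}(\uF,\vTest) = a_{\bmu}(\uF,\vTest) - \int_{\manifold}\bh\,\uF\cdot\grad\vTest\,dx$, valid for all $\uF,\vTest\in H^1(\manifold)$, which isolates the contribution of the perturbation $\bh$. First I would check that the candidate derivative in \eqref{eq:stationary_der} is well defined: since $F(\bmu)\in H^1(\manifold)\subset L^2(\manifold)$, the right-hand side is a bounded linear functional on $H^1_{\diamond}(\manifold)$, so part~3 of the Lemma yields a unique $\uF'_{\bmu,\bh}\in H^1_{\diamond}(\manifold)$, the map $\bh\mapsto\uF'_{\bmu,\bh}$ is linear, and $\|\uF'_{\bmu,\bh}\|_{H^1}\le\|\tilde{L}_{\bmu}^{-1}\|\,\|\bh\|_\infty\,\|F(\bmu)\|_{L^2}$. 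Thus it is a bounded operator $L^\infty\to L^2$, the correct candidate for $F'[\bmu]$.

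Next, set $w:=F(\bmu+\bh)-F(\bmu)$; since both densities integrate to $1$ over $\manifold$, we have $w\in H^1_{\diamond}(\manifold)$. Subtracting the weak equations $a_{\bmu+\bh}(F(\bmu+\bh),\vTest)=0$ and $a_{\bmu}(F(\bmu),\vTest)=0$ and applying the identity above gives $\tilde{a}_{\bmu}(w,\vTest)=\int_{\manifold}\bh\,F(\bmu+\bh)\cdot\grad\vTest\,dx$ for all $\vTest\in H^1_{\diamond}(\manifold)$. Subtracting the defining equation of $\uF'_{\bmu,\bh}$ then produces the crucial self-referential identity for the remainder $r:=w-\uF'_{\bmu,\bh}=F(\bmu+\bh)-F(\bmu)-F'[\bmu]\bh$, namely
\[
\tilde{a}_{\bmu}(r,\vTest)=\int_{\manifold}\bh\,w\cdot\grad\vTest\,dx,\qquad \vTest\in H^1_{\diamond}(\manifold).
\]
The point is that the remainder solves the same variational problem as $w$ but with $w$ itself on the right-hand side.

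From here the tangential cone condition is immediate. The right-hand side is bounded by $\|\bh\|_\infty\|w\|_{L^2}\|\grad\vTest\|_{L^2}\le\|\bh\|_\infty\|w\|_{L^2}\|\vTest\|_{H^1}$, so $\|\tilde{L}_{\bmu}r\|_{(H^1_{\diamond})^*}\le\|\bh\|_\infty\|w\|_{L^2}$; applying $\tilde{L}_{\bmu}^{-1}$ and $\|r\|_{L^2}\le\|r\|_{H^1}$ yields $\|r\|_{L^2}\le\|\tilde{L}_{\bmu}^{-1}\|\,\|\bh\|_\infty\,\|w\|_{L^2}$, which is exactly \eqref{eq:tang_cone_Fokker} with $\tilde{C}_{\bmu}=\|\tilde{L}_{\bmu}^{-1}\|$ and $w=F(\bmu+\bh)-F(\bmu)$.

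For Fr\'echet differentiability I still need $\|r\|_{L^2}=o(\|\bh\|_\infty)$, for which it suffices that $\|w\|_{L^2}\to 0$ as $\|\bh\|_\infty\to 0$. The same inverse estimate applied to $\tilde{a}_{\bmu}(w,\vTest)=\int_{\manifold}\bh\,F(\bmu+\bh)\cdot\grad\vTest\,dx$ gives $\|w\|_{H^1}\le\|\tilde{L}_{\bmu}^{-1}\|\,\|\bh\|_\infty\,\|F(\bmu+\bh)\|_{L^2}$, so the remaining task is a locally uniform $L^2$ bound on $F(\bmu+\bh)$ for small $\bh$. This is where the main work lies. I would obtain it by writing $F(\bmu+\bh)=|\manifold|^{-1}+\uF_{\diamond}$ with $\uF_{\diamond}\in H^1_{\diamond}(\manifold)$, testing to get $\|\uF_{\diamond}\|_{H^1}\le|\manifold|^{-1}\|\tilde{L}_{\bmu+\bh}^{-1}\|\,\|\bmu+\bh\|_\infty$, and controlling $\|\tilde{L}_{\bmu+\bh}^{-1}\|$ by a Neumann-series perturbation argument: since $\|\tilde{L}_{\bmu}-\tilde{L}_{\bmu+\bh}\|\le\|\bh\|_\infty$, the inverse stays bounded whenever $\|\bh\|_\infty<\|\tilde{L}_{\bmu}^{-1}\|^{-1}$. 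Together these give $\|w\|_{L^2}=O(\|\bh\|_\infty)$, hence $\|r\|_{L^2}=O(\|\bh\|_\infty^2)$, completing both claims. The main obstacle is therefore not the cone condition itself, which falls out of the linear structure and the remainder identity, but securing this locally uniform a priori bound on the densities $F(\bmu+\bh)$.
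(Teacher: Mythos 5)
Your proposal is correct and follows essentially the same route as the paper: the same remainder identity $\tilde{a}_{\bmu}\bigl(F(\bmu+\bh)-F(\bmu)-\uF'_{\bmu,\bh},\vTest\bigr)=\int_{\manifold}\bh\,\bigl(F(\bmu+\bh)-F(\bmu)\bigr)\cdot\grad\vTest\,dx$ combined with the bound $\|\tilde{L}_{\bmu}^{-1}\|$ gives the tangential cone condition, and differentiability then reduces to continuity of $F$. The only difference is in that last continuity step, where the paper absorbs the $\|\tilde\uF\|_{H^1}$ term into the left-hand side (requiring $\tilde{C}_{\bmu}\|\bh\|_\infty<1$) rather than using your Neumann-series perturbation of $\tilde{L}_{\bmu+\bh}^{-1}$; both arguments are valid and essentially equivalent.
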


\begin{proof}
Note that $\tilde \uF := F(\bmu + \bh) - F(\bmu)$ belongs to $H^1_{\diamond}(\manifold)$ and satisfies 
\begin{eqnarray*}
\tilde{a}_{\bmu}(\tilde \uF, \vTest ) = \int_{\manifold} (F(\bmu)+\tilde \uF)\;\bh \cdot \grad\vTest\,dx,\qquad \vTest\in H^1_{\diamond}(\manifold).
\end{eqnarray*}
For $\vTest\neq 0$ the functional on the right hand side is bounded by
\[
\fl
\frac{1}{\|\vTest\|_{H^1}}\int_{\manifold} (F(\bmu)+\tilde \uF)\;\bh \cdot \grad\vTest\,dx
\le  \|\bh\|_\infty \|F(\bmu)+\tilde \uF\|_{L^2}
\le  \|\bh\|_\infty \left(\|F(\bmu)\|_{L^2} + \|\tilde \uF\|_{H^1}\right)
\]
Therefore, 
\(
\|\tilde \uF\|_{H^1} 
\le \tilde C_{\bmu} \|\bh\|_\infty \left(\|F(\bmu)\|_{L^2} + \|\tilde \uF\|_{H^1}\right),
\)
which implies
\[
(1 - \tilde C_{\bmu}\|\bh\|_\infty) \|\tilde \uF\|_{H^1} \le \tilde C_{\bmu} \|\bh\|_\infty \|F(\bmu)\|_{L^2}.
\]
Hence, $F$ is continuous since $\|F(\bmu+\bh) -F(\bmu)\|_{H^1} = \|\tilde \uF\|_{H^1}$ tends to $0$ as 
$\|\bh\|_\infty$ tends to $0$.
As 
\[
\tilde{a}_{\bmu}(\tilde \uF-\uF_{\bmu,\bh}, \vTest ) 
= \int_{\manifold} \tilde \uF\;\bh \cdot \grad\vTest\,dx,\qquad \vTest\in H^1_{\diamond}(\manifold),
\]
a similar estimate of the right hand side as above yields the bound
\begin{eqnarray*}
\fl
\|F(\bmu + \bh) - F(\bmu) - \uF'_{\bmu,\bh}\|_{L^2} &= \|\tilde \uF - \uF'_{\bmu,\bh}\|_{L^2}
\leq \|\tilde \uF - \uF'_{\bmu,\bh}\|_{H^1}
\le \tilde C_{\bmu} \|\bh\|_\infty \|\tilde \uF\|_{L^2},
\end{eqnarray*}
which shows the tangential cone condition. Together with the continuity of $F$ this implies that 
$F$ is Fr\'echet differentiable, and $F'[\bmu]\bh=\uF'_{\bmu,\bh}$. 
\end{proof}

\begin{example}\label{ex:explicit}
If $\bmu$ has a representation of the form 
\begin{equation}\label{eq:potential_repr}
\bmu = \sigma\sigma^\top\grad\phi
\end{equation}
for some potential $\phi$ the solution of the stationary Fokker-Planck equation \eqref{eq:stationary_weak} is given explicitly by 
\[
\uF = \frac{1}{\int_\manifold \exp(2 \phi) \, dx}\exp(2 \phi),
\]
since
\begin{eqnarray*}
\grad \uF &= \frac{2}{\int_\manifold \exp(2 \phi) \, dx} \grad\phi \exp\left(2 \phi\right)
=  2(\sigma\sigma^{\top})^{-1} \bmu \uF.
\end{eqnarray*}
The normalization constant $\int_\manifold \exp(2 \phi) \, dx$ ensures that 
$\uF$ is a density. In particular, we obtain the following explicit
formula for the inverse of $F$:
\begin{equation}\label{eq:explicit_inverse}
\bmu = \frac{\sigma\sigma^{\top}\grad u}{2u}.
\end{equation} 
The methods discussed below do not rely on this formula and the assumption 
\eqref{eq:potential_repr}. 
\end{example}

\section{General convergence results for inverse problems 
with i.i.d.\ sample data}
\label{sec:general_convergence}

In this section we consider the following general setting:
\begin{itemize}
\item $\Xspace$ is a Banach space, $\frakB \subset \Xspace$ a convex subset, $\manifold\subset\mathbb{R}^d$ a bounded 
Lipschitz domain, and $H^s(\manifold)$ with $s>\frac{d}{2}$ an $L^2$-based  Sobolev space. 
\item The range of operator $\Opg:\frakB\to H^s(\manifold)$ consists of probability densities, i.e.\ $\Opg(\sol)\geq 0$ and 
$\int_{\manifold} \Opg(\sol)\,dx=1$ for all $\sol\in\frakB$. 
\item There exists $R>1$ such that $\sup_{\sol\in\frakB}\|\Opg(\sol)\|_{H^s}\leq R$.
\item $\sol^{\dag}\in\frakB$ is the exact solution, $\uF^{\dag}:=\Opg(\sol^{\dag})$, and 
observations are described by independent random variables $\bfY_1,\dots,\bfY_n$ with 
density $\uF^{\dag}$. Recall the definition of the empirical measure 
$\Phi_n$ in \eqref{eq:empirical_meas}.
\end{itemize}

\subsubsection*{A concentration inequality.}
Note that 
\begin{eqnarray*}
&& \EW{\int_{\manifold}\varphi\,d\Phi_n} = \int_{\manifold}\varphi\uF^{\dag}\,dx,\qquad\mbox{and}\qquad 
\Var{\int_{\manifold}\varphi\,d\Phi_n} 
= \frac{1}{n} \int_{\manifold}\varphi^2\uF^{\dag}\,dx
\end{eqnarray*}
whenever the right hand sides are well-defined. We will need a concentration inequality which is uniform in $\varphi$. 
Our starting point is a version of the concentration inequality in the seminal work by 
Talagrand \cite{Talagrand:96}, which is due to Massart \cite{massart:00}
and has explicit constants. 
In our notation a special case of this inequality can be stated 
as follows:
\begin{theorem}[Theorem 3 in \cite{massart:00}]\label{theo:massart}
Let $\mathcal{F}\subset L^{\infty}(D)$ be a countable family of functions with $\|\varphi\|_{\infty} \leq b$ 
for all $\varphi\in\mathcal{F}$. Moreover, let 
\[
Z:=n\sup_{\varphi\in\mathcal{F}}\left|\int_{\manifold}\varphi(d\Phi_n - \uF^{\dag}dx)\right|
\]
and $v:=n\sup_{\varphi\in\mathcal{F}}\int_{\manifold}\varphi^2\uF^{\dag}\,dx$. Then 
\[
\Prob{Z\geq (1+\epsilon)\EW{Z}+\sqrt{8v\xi}+\kappa(\epsilon)b\xi} \leq \exp(-\xi)
\] 
for all $\epsilon,\xi>0$ where $\kappa(\epsilon)= 2.5+32/\epsilon$. 
\end{theorem}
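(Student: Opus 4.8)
The plan is to derive Theorem~\ref{theo:massart} by the entropy (log-Sobolev) method for empirical processes, following Ledoux and Massart \cite{massart:00}. First I would rewrite the object of interest in additive form. Since $\int_\manifold \varphi\, d\Phi_n = \frac1n\sum_{i=1}^n \varphi(\bfY_i)$ and $\int_\manifold \varphi\, \uF^\dag\, dx = \EW{\varphi(\bfY_1)}$, we have $Z = \sup_{\varphi\in\mathcal{F}}\left|\sum_{i=1}^n\big(\varphi(\bfY_i)-\EW{\varphi(\bfY_i)}\big)\right|$, a supremum of a centred sum of independent contributions. Because $\mathcal{F}$ is countable I would first fix a finite subfamily $\mathcal{F}_N$, prove the inequality with constants independent of $N$, and pass to the limit by monotone convergence; the bound $\|\varphi\|_\infty\le b$ guarantees that all Laplace transforms appearing below are finite.

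The core is a bound on the log-Laplace transform $L(\lambda):=\ln\EW{e^{\lambda(Z-\EW{Z})}}$ for $\lambda>0$. Viewing $Z=Z(\bfY_1,\dots,\bfY_n)$ as a function of independent coordinates, I would invoke the tensorization inequality for entropy, $\mathrm{Ent}\big(e^{\lambda Z}\big)\le\sum_{i=1}^n\EW{\mathrm{Ent}_i\big(e^{\lambda Z}\big)}$, where $\mathrm{Ent}_i$ denotes entropy in the $i$-th variable with the others frozen. For each $i$ let $Z^{(i)}$ be the same supremum with the $i$-th sample deleted; then $0\le Z-Z^{(i)}$ is bounded by a single centred contribution of modulus at most $2b$, whose conditional second moments sum to a quantity controlled by $v$. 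Feeding these two facts into the standard per-coordinate entropy estimate (via the convex function $\phi(x)=e^x-x-1$) yields a Bennett-type differential inequality for $L$, essentially $L(\lambda)\le \frac{v}{b^2}\phi(b\lambda)+c\,\lambda\,b\,\EW{Z}$ up to the bookkeeping of constants.

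From the differential inequality I would solve for $L$ and then apply the Chernoff bound $\Prob{Z\ge\EW{Z}+t}\le\exp\big(-\sup_{\lambda>0}(\lambda t-L(\lambda))\big)$. Optimising in $\lambda$ produces a Bernstein-type tail in which three contributions appear: a term proportional to $\EW{Z}$, a term of order $\sqrt{v\xi}$, and a term of order $b\xi$. Using Young's inequality $2\sqrt{ab}\le\epsilon a+\epsilon^{-1}b$ to split the cross terms, the $\EW{Z}$ contribution is absorbed into the multiplicative slack $(1+\epsilon)\EW{Z}$, the diffusive part collapses to exactly $\sqrt{8v\xi}$, and the residual pieces assemble into $\kappa(\epsilon)b\xi$; substituting $t=\epsilon\,\EW{Z}+\sqrt{8v\xi}+\kappa(\epsilon)b\xi$ and verifying $\lambda t-L(\lambda)\ge\xi$ for a suitable $\lambda$ gives the stated inequality.

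The qualitative inequality is classical; the \emph{genuine} difficulty, and Massart's actual contribution, is obtaining the sharp explicit constants $1+\epsilon$, $\sqrt{8}$ and $\kappa(\epsilon)=2.5+32/\epsilon$. This forces one to retain every constant in the tensorization step and in the solution of the differential inequality, and to choose the splitting in Young's inequality so that the multiplicative $(1+\epsilon)$ factor carries precisely the unavoidable $\EW{Z}$ term while $\kappa(\epsilon)$ blows up like $1/\epsilon$ as $\epsilon\to0$. I expect this constant-tracking, rather than any conceptual step, to be the main obstacle; for the application in this paper one may alternatively cite the result directly.
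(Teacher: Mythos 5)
This statement is not proved in the paper at all: it is imported verbatim as Theorem~3 of Massart \cite{massart:00} (the theorem environment's title says exactly that), and the authors use it as a black box to derive Corollary~\ref{cor:conc}. So there is no in-paper proof to compare your attempt against; the honest baseline is ``cite and move on,'' which is also the fallback you yourself propose in your last sentence.

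Judged as a proof attempt, your outline correctly identifies the route Massart actually takes --- rewrite $Z$ as $\sup_{\varphi}\bigl|\sum_{i=1}^n(\varphi(\bfY_i)-\EW{\varphi(\bfY_1)})\bigr|$, reduce to finite subfamilies, tensorize entropy over the independent coordinates, obtain a Bennett-type differential inequality for the log-Laplace transform, and finish with Chernoff plus a Young-inequality split that trades the cross term for the multiplicative factor $(1+\epsilon)$ --- but it has a genuine gap: the entire content of this particular theorem is the explicit constants $(1+\epsilon)$, $\sqrt{8}$ and $\kappa(\epsilon)=2.5+32/\epsilon$, and your argument never produces them; it only asserts that they ``assemble'' after bookkeeping. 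Two specific points would need real work. First, your claim that $0\le Z-Z^{(i)}\le 2b$ with conditional second moments controlled by $v$ is the delicate self-bounding step; for the two-sided (absolute-value) supremum treated here it does not follow from a one-line triangle-inequality argument, and it is precisely why Massart's two-sided Theorem~3 carries worse constants than his one-sided Theorem~2. Second, the differential inequality you write down, $L(\lambda)\le \frac{v}{b^2}\phi(b\lambda)+c\,\lambda\,b\,\EW{Z}$, is stated ``up to bookkeeping,'' but the value $32/\epsilon$ in $\kappa(\epsilon)$ comes exactly from how that bookkeeping interacts with the choice of $\lambda$ and the Young split; without carrying it out you have proved a qualitatively similar inequality with unspecified constants, not the stated one. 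Since the paper only needs the inequality with \emph{some} explicit constants (they are absorbed into $C_{\rm c}$ in Corollary~\ref{cor:conc}), citing \cite{massart:00} as the authors do is the appropriate resolution; if you want a self-contained proof, the constant-tracking is unavoidable.
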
 
Massart also proved a similar inequality for the left tail of $Z$, 
but we only need the inequality above, so we might rather speak of 
a deviation inequality. 

In analogy to \cite{WH:12} where similar results were derived  
using a concentration inequality for Poisson processes 
in \cite{RB:03} instead of Theorem \ref{theo:massart}, 
we show the following corollary:
\begin{corollary}\label{cor:conc}
There exists a constant $C_{\rm c} \ge 1$ depending only on $\manifold$ and 
$s$ such that for $\rho \ge R C_{\rm c}$ and for all $n \in \N$
\begin{equation}\label{eq:conc_ieq}
\Prob{\sup_{\|\varphi\|_{H^s(\manifold)}\leq R} \left| \int_\manifold \varphi \left(d\Phi_n - \uF^\dag dx \right) \right| 
\ge \frac{\rho}{\sqrt{n}}} \le \exp\left(-\frac{\rho}{R C_{\rm c}}\right).
\end{equation}
\end{corollary}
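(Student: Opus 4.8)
The plan is to deduce \eqref{eq:conc_ieq} from Massart's deviation inequality (Theorem~\ref{theo:massart}) applied to a countable family of functions inside the Sobolev ball, after estimating the three quantities $b$, $v$ and $\EW{Z}$ that enter that inequality. Everything rests on the Sobolev embedding $H^s(\manifold)\hookrightarrow C(\overline{\manifold})$, which holds because $s>\frac{d}{2}$; I denote by $C_s$ a constant with $\|\varphi\|_\infty\le C_s\|\varphi\|_{H^s}$ for all $\varphi\in H^s(\manifold)$, depending only on $\manifold$ and $s$.

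First I would reduce to a countable family. The linear functional $\varphi\mapsto\int_\manifold\varphi\,(d\Phi_n-\uF^\dag dx)$ is continuous on $H^s(\manifold)$ by the embedding, and the ball $\{\varphi:\|\varphi\|_{H^s}\le R\}$ is separable; hence the supremum in \eqref{eq:conc_ieq} equals the supremum over a countable dense subset $\mathcal{F}$ of that ball. On $\mathcal{F}$ we have $\|\varphi\|_\infty\le C_sR=:b$, so the hypothesis of Theorem~\ref{theo:massart} is met, the variable $Z$ defined there satisfies $\{Z\ge\rho\sqrt n\}=\{\sup_{\varphi\in\mathcal{F}}|\int_\manifold\varphi\,(d\Phi_n-\uF^\dag dx)|\ge\rho/\sqrt n\}$, and $v=n\sup_{\varphi\in\mathcal{F}}\int_\manifold\varphi^2\uF^\dag dx\le n(C_sR)^2$ since $\int_\manifold\uF^\dag dx=1$.

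The main step is the bound on $\EW{Z}$, and I would obtain it by duality. Writing $\mu_n:=d\Phi_n-\uF^\dag dx$ and using $\sup_{\|\varphi\|_{H^s}\le R}|\langle\mu,\varphi\rangle|=R\|\mu\|_{H^{-s}}$, we get $Z=R\|\sum_{i=1}^n\zeta_i\|_{H^{-s}}$ with the i.i.d.\ centered increments $\zeta_i:=\delta_{\bfY_i}-\uF^\dag dx\in H^{-s}(\manifold)$. The point evaluations belong to $H^{-s}(\manifold)$ precisely because $s>\frac{d}{2}$, with $\|\delta_y\|_{H^{-s}}\le C_s$ uniformly in $y$. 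Since $H^{-s}(\manifold)$ is a Hilbert space, the cross terms in $\EW{\|\sum_i\zeta_i\|_{H^{-s}}^2}$ vanish by independence and centering, leaving $n\,\EW{\|\zeta_1\|_{H^{-s}}^2}\le nC_s^2$; Jensen's inequality then gives $\EW{Z}\le RC_s\sqrt n$. I expect this Hilbert-space variance computation, together with the uniform bound on $\|\delta_y\|_{H^{-s}}$, to be the crux of the argument, as it is the only place where $s>\frac{d}{2}$ is genuinely exploited.

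Finally I would insert these three estimates into Theorem~\ref{theo:massart} with $\epsilon=1$, so $\kappa(1)=34.5$, and $\xi=\rho/(RC_c)$, obtaining $\Prob{Z\ge 2RC_s\sqrt n+C_sR\sqrt{8n\xi}+34.5\,C_sR\,\xi}\le\exp(-\rho/(RC_c))$. It then remains to choose $C_c$ as a fixed multiple of $\max\{C_s,1\}$, depending only on $\manifold$ and $s$, so large that under the standing hypothesis $\rho\ge RC_c$ (which forces $\xi\ge 1$) each of the three terms on the left is at most $\frac{1}{3}\rho\sqrt n$, whence their sum is at most $\rho\sqrt n$ and the event $\{Z\ge\rho\sqrt n\}$ is contained in the event controlled by Massart's bound. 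Checking the three resulting thresholds---the constant term needs $C_c\ge 6C_s$, the $\sqrt\xi$ term needs $C_c\ge 3\sqrt 8\,C_s$, and the term linear in $\xi$ (using $1/\sqrt n\le 1$) needs $C_c\ge 103.5\,C_s$---and taking the largest completes the proof.
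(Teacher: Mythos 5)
Your proof is correct and follows essentially the same route as the paper's (sketched) proof: reduction to a countable family by separability, the Sobolev embedding $H^s\hookrightarrow L^\infty$ to bound $b$ and $v$, a bound $\EW{Z}\le C\,R\sqrt{n}$, and then Massart's inequality with $\epsilon=1$ and $\xi=\rho/(RC_{\rm c})$. The only difference is that you supply a self-contained argument for the key estimate $\EW{Z}\le C_sR\sqrt{n}$ (via the Hilbert-space duality computation $Z=R\|\sum_i\zeta_i\|_{(H^s)^*}$ with $\|\delta_y\|_{(H^s)^*}\le C_s$, which is exactly where $s>d/2$ enters), whereas the paper defers this step to the analogue of Lemma~A.2 in \cite{WH:12}; your argument is a valid way to obtain that bound.
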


\begin{proof} \emph{(Sketch)}
The most difficult part in the derivation of Corollary \ref{cor:conc} 
from Theorem \ref{theo:massart} is the estimation of $\EW{Z}$. 
In analogy to \cite[Lemma A.2]{WH:12} we can prove that
\[
\EW{Z}\leq \sqrt{n}C_1R
\]
with a constant $C_1$ depending only on $s$ and $\manifold$. 
As $H^s(\manifold)$ is continuously embedded in $L^{\infty}(\manifold)$,  
we have $\|\varphi\|_{\infty}\leq C_2R$ for all 
$\varphi\in H^s(\manifold)$ with $\|\varphi\|_{H^s}\leq R$
where $C_2$ is the norm of the embedding operator. 
Moreover, $v\leq  n (C_2R)^2$ as $\|\uF^\dag\|_{L^1}=1$. 
Using the separability of balls in $H^s(\manifold)$ and choosing
$\epsilon=1$ in Theorem~\ref{theo:massart} we obtain
\[\fl
\Prob{\sup_{\|\varphi\|_{H^s(\manifold)}\leq R} \left| \int_\manifold \varphi \left(d\Phi_n - \uF^\dag dx \right) \right| 
\ge \paren{\frac{2C_1}{\sqrt{n}}+\frac{C_2\sqrt{8\xi}}{\sqrt{n}}
+ \frac{34.5C_2\xi}{n}}R}\leq \exp(-\xi).
\]
As $\frac{1}{n}\leq \frac{1}{\sqrt{n}}$ and $\sqrt{\xi}\leq \xi$ for
$\xi\geq 1$, this yields \eqref{eq:conc_ieq} with 
$C_{\rm c}:= 2C_1+(34.5+\sqrt{8})C_2$ and 
$\rho = RC_{\rm c}\xi$. 
\end{proof}

\subsubsection*{Distance measures.}
To state our convergence theorems we need both distance measures in 
$\Xspace$ and $L^1(\manifold)$. As usual for variational regularization methods convergence rates are given with respect to the Bregman distance associated to 
the penalty term as loss function. The Bregman distance with respect to $\calR$ 
and $\sol^* \in \partial\calR(\sol^\dag)$ is
\[
D_\calR^{\sol^*}(\sol,\, \sol^\dag) := \calR(\sol) - \calR(\sol^\dag) - \langle \sol^*, \sol - \sol^\dag \rangle.
\]
Recall that for quadratic penalty in Hilbert spaces we have 
$D_\calR^{\sol^*}(\sol,\, \sol^\dag)= \|\sol-\sol^{\dag}\|^2$. 
In general, $D_{\calR}^{\sol^*}$ is nonnegative with 
$D_{\calR}^{\sol^*}(\sol^{\dag},\sol^{\dag})=0$, but it is neither 
symmetric nor does it satisfy a triangle inequality. 

The distance measure in $L^1(\manifold)$ which corresponds to the negative log-likelihood introduced in \eqref{eq:S0} 
is the Kullback-Leibler divergence
\[
\KL(\uF;\vTest) := \int_\manifold \vTest-\uF  
-\uF \ln\left(\frac{\vTest}{\uF}\right) \, dx
\]
with the convention $0\ln 0:=0$ and $\ln(x):=-\infty$ for $x\leq 0$. 
Note that $\KL(\uF^{\dagger};\vTest)= 
\EW{\calS_0(\Phi_n;\vTest)-\calS_0(\Phi_n;\uF^{\dagger})}$, in other 
words $\KL$ is the expectation of the negative log-likelihood functional with an additive constant chosen 
in a way such that $\KL(\uF^{\dagger};\vTest)\geq 0$ 
for all $\vTest$ and $\KL(\uF^{\dagger};\uF^{\dagger})=0$.  
If $\uF$ and $\vTest$ are probability densities, the formula above 
simplifies to $\KL(\uF;\vTest) = \int_\manifold \uF \ln\left(\vTest/\uF\right) dx$, 
but since the values of the linearization of $F$ are not densities 
in general, we have to use the general formula. 

Note that 
\[
\calS_0(\Phi_n;\vTest)-\calS_0(\Phi_n;\uF^{\dag})-\KL(\uF^{\dagger};\vTest)
= \int -\ln \frac{v}{\uF^{\dagger}} \left(d\Phi_n-\uF^{\dagger}dx\right).
\]
To prove rates of convergence we 
have to bound the absolute value of the right hand side with sufficiently large probability. 
In principle, this can be done by applying 
Corollary \ref{cor:conc} with $\varphi = -\ln \frac{v}{\uF^{\dagger}}$. 
However, this corollary is only applicable if we have uniform bounds 
$0<c\leq \frac{v}{\uF^{\dagger}}\leq C<\infty$ for all $v\in F(\mathfrak{B})$, 
which is not always the case.
Therefore, we introduce a shift parameter $\shift>0$ and use 
$\KL(\uF^{\dagger}+\shift,\vTest+\shift)$ as 
limiting data fidelity term and the corresponding empirical data
fidelity term 
\[
\calS_{\shift}(\Phi_n;\vTest) = \int_{\manifold} \vTest dx 
- \int_{\manifold} \ln (\vTest+\shift)(d\Phi_n+\shift dx)
\] 
such that 
\[
\fl
\calS_{\shift}(\Phi_n;\vTest)-\calS_{\shift}(\Phi_n;\uF^{\dag})
-\KL(\uF^{\dagger}+\shift;\vTest+\shift)
= \int -\ln \frac{v+\shift}{\uF^{\dagger}+\shift} \left(d\Phi_n-\uF^{\dagger}dx\right).
\]
Now we can bound 
\[
\mathrm{err}:=\sup_{v\in \Opg(\mathfrak{B})}\left|\calS_{\shift}(\Phi_n;\vTest)-\calS_{\shift}(\Phi_n;\uF^{\dag})
-\KL(\uF^{\dagger}+\shift;\vTest+\shift)\right|
\]
with high probability using Corollary \ref{cor:conc} since 
$\sup_{\vTest\in \Opg(\mathfrak{B})}\|-\ln \frac{\vTest+\shift}{\uF^{\dagger}+\shift}\|_{H^s}<\infty$ 
under our assumptions.

\subsubsection*{Convergence rate results.}
To obtain rates of convergence we need some kind of smoothness condition 
on the solution. Source conditions are commonly used for this purpose.
In the regularization theory for Banach spaces they are formulated as variational 
inequalities (see \cite{HKPS:08} 
and \cite{flemming:12b} for relations to other formulations of source conditions). 
We assume that there exists of a constant $\beta > 0$, 
$\sol^*\in\partial\calR(\sol^{\dagger})$ and a 
concave, strictly increasing function 
$\Lambda : [0 , \infty[ \; \to [0 , \infty[$ with $\Lambda(0) = 0$ 
such that
\begin{equation}\label{eq:source_condition}
\fl
\beta D_{\calR}^{\sol^*}(\sol,\, \sol^\dag) \le \calR(\sol) - \calR(\sol^\dag) + \Lambda\Big(\KL\paren{\uF^\dag+\shift; \Opg(\sol)+\shift}\Big) \quad \mbox{for all } \sol \in \frakB.
\end{equation}
The proof of the following theorem is now completely analogous 
to the proof of \cite[Theorem 4.3]{WH:12}, but we point out that in 
\cite[eq.~(10)]{WH:12} on the left hand side $\EW{\mathcal{S}(G_t;g^{\dagger})}$ 
should be replaced by $\mathcal{S}(G_t;g^{\dagger})$ and on the right hand side 
$\ln(g+\sigma)$ by $\ln \frac{g+\sigma}{g^{\dagger}+\sigma}$. 
\begin{theorem}\label{theo:tikh}
If $\uF^{\dagger}$ satisfies the variational source condition 
\eqref{eq:source_condition} for some $\shift>0$, the nonlinear Tikhonov regularization
\eqref{eq:Tikh} with $\calS = \calS_{\shift}$ has a global minimizer $\solhat_\alpha$, 
and the regularization parameter is chosen such that 
\begin{equation}\label{eq:apriori_alphaTikh}
\alpha^{-1} \in -\partial(- \Lambda)\left(\frac{2\rho}{\sqrt{n}}\right),
\end{equation}
then we have
\begin{equation}\label{eq:apriori_rates_tikh}
\EW{D_\calR^{\sol^*}(\solhat_{\alpha}, \sol^\dag)} 
= \mathcal{O}\left(\Lambda\left(\frac{1}{\sqrt{n}}\right)\right),
\qquad n\to\infty. 
\end{equation}
\end{theorem}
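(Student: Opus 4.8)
The plan is to adapt the variational-regularization template of \cite[Theorem~4.3]{WH:12} to the present i.i.d.\ setting, the only genuinely new ingredient being the replacement of the Poisson concentration bound by Corollary~\ref{cor:conc}. Abbreviate $T:=\KL\paren{\uF^\dag+\shift;\Opg(\solhat_\alpha)+\shift}$. Starting from the minimality of $\solhat_\alpha$ in \eqref{eq:Tikh} with $\calS=\calS_\shift$, i.e.\ $\calS_\shift(\Phi_n;\Opg(\solhat_\alpha))+\alpha\calR(\solhat_\alpha)\le\calS_\shift(\Phi_n;\uF^\dag)+\alpha\calR(\sol^\dag)$, and inserting the identity for $\calS_\shift(\Phi_n;v)-\calS_\shift(\Phi_n;\uF^\dag)$ recorded just before the definition of $\mathrm{err}$, I would first derive the purely deterministic estimate
\[
\alpha\paren{\calR(\solhat_\alpha)-\calR(\sol^\dag)}\le \calS_\shift(\Phi_n;\uF^\dag)-\calS_\shift(\Phi_n;\Opg(\solhat_\alpha))\le -T+\mathrm{err},
\]
valid for every realization of the sample, since the integral term appearing there has absolute value at most $\mathrm{err}$ by definition.

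Next I would feed this into the variational source condition \eqref{eq:source_condition}, obtaining
\[
\beta D_\calR^{\sol^*}(\solhat_\alpha,\sol^\dag)\le \calR(\solhat_\alpha)-\calR(\sol^\dag)+\Lambda(T)\le -\frac{T}{\alpha}+\frac{\mathrm{err}}{\alpha}+\Lambda(T).
\]
The term $\Lambda(T)-T/\alpha$ is then controlled by the parameter choice \eqref{eq:apriori_alphaTikh}: since $-\Lambda$ is convex, $\alpha^{-1}\in-\partial(-\Lambda)(2\rho/\sqrt n)$ says precisely that the affine function $t\mapsto\Lambda(2\rho/\sqrt n)+\alpha^{-1}(t-2\rho/\sqrt n)$ lies above the concave $\Lambda$, whence $\Lambda(T)-T/\alpha\le\Lambda(2\rho/\sqrt n)-2\rho/(\alpha\sqrt n)$ pointwise in $T$. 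This yields the master inequality
\[
\beta D_\calR^{\sol^*}(\solhat_\alpha,\sol^\dag)\le \Lambda\paren{\frac{2\rho}{\sqrt n}}-\frac{2\rho}{\alpha\sqrt n}+\frac{\mathrm{err}}{\alpha},
\]
again holding for every sample.

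Taking expectations and using $\EW{\mathrm{err}}\le C'/\sqrt n$ (established below), I would fix the free constant $\rho$ once and for all so large that $2\rho\ge C'$ and $2\rho\ge1$; then $-2\rho/(\alpha\sqrt n)+\EW{\mathrm{err}}/\alpha$ is nonpositive and $\beta\EW{D_\calR^{\sol^*}(\solhat_\alpha,\sol^\dag)}\le\Lambda(2\rho/\sqrt n)$. Finally, concavity of $\Lambda$ with $\Lambda(0)=0$ gives $\Lambda(2\rho\,t)\le 2\rho\,\Lambda(t)$ for $2\rho\ge1$, so with $t=1/\sqrt n$ the bound becomes $2\rho\,\Lambda(1/\sqrt n)$, which is exactly \eqref{eq:apriori_rates_tikh}. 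Existence of the global minimizer $\solhat_\alpha$ I would obtain by the direct method, using convexity and lower semicontinuity of $\calR$ together with continuity of $\Opg$ and of $v\mapsto\calS_\shift(\Phi_n;v)$, as in \cite{WH:12}.

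The main obstacle is the bound $\EW{\mathrm{err}}=\mathcal{O}(1/\sqrt n)$. Here I would apply Corollary~\ref{cor:conc} to the family $\{-\ln\frac{v+\shift}{\uF^\dag+\shift}:v\in\Opg(\frakB)\}$, which is legitimate precisely because the shift $\shift>0$ keeps these log-ratios bounded in $H^s(\manifold)$ uniformly in $v$; this is where $\shift>0$ is indispensable and where the argument would fail for the unshifted fidelity $\calS_0$. Writing $\tilde R$ for the resulting uniform $H^s$-radius, Corollary~\ref{cor:conc} gives $\Prob{\mathrm{err}\ge\rho'/\sqrt n}\le\exp(-\rho'/(\tilde R C_{\rm c}))$ for $\rho'\ge\tilde R C_{\rm c}$, and integrating this tail via $\EW{\mathrm{err}}=\int_0^\infty\Prob{\mathrm{err}\ge t}\,dt$ produces a constant $C'$ depending only on $\tilde R$, $C_{\rm c}$, $s$ and $\manifold$. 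Verifying the uniform $H^s$-boundedness of the log-ratios (two-sided bounds on $\Opg(\frakB)+\shift$ and control of derivatives up to order $s$) is the one step requiring real work beyond the cited template; everything else is the algebraic combination above.
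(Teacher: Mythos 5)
Your argument is correct and follows essentially the same route the paper intends: the paper gives no explicit proof but refers to \cite[Theorem~4.3]{WH:12}, whose template is exactly your chain (minimality of $\solhat_\alpha$, the identity linking $\calS_\shift$ to $\KL(\uF^\dag+\shift;\cdot+\shift)$ plus the stochastic term $\mathrm{err}$, the variational source condition, the subgradient/Fenchel step from \eqref{eq:apriori_alphaTikh}, and control of $\mathrm{err}$ via Corollary~\ref{cor:conc} in place of the Poisson concentration bound). The only cosmetic difference is that you take expectations of the pointwise master inequality directly rather than splitting into a good event and its exponentially small complement; both work here since the shift $\shift>0$ keeps the log-ratios uniformly bounded in $H^s$, as the paper asserts.
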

To prove convergence of the Newton-type iteration we additionally 
have to impose a tangential cone condition adapted to our data 
fidelity term. Let 
\[
\calT_{\shift}(\uF;v):=\left\{\begin{array}{ll}
\KL(\uF+\shift,v+\shift)&\mbox{ if } v\geq -\shift/2 \\
\infty&\mbox{ else.}
\end{array}\right.
\] 
We assume that for all $\sol, \altsol \in \frakB$
\begin{equation}\label{eq:KL-tcc}
\eqalign{
\frac{1}{C_{\rm tcc}} \calT_{\shift}\left(\uF^\dag; \Opg(\altsol)\right) 
- \eta \calT_{\shift} \left(\uF^\dag; \Opg(\sol)\right) 
&\leq \calT_{\shift} \left(\uF^\dag; \Opg(\sol) + \Opg'[\sol](\altsol-\sol)\right)\\
&\leq C_{\rm tcc} \calT_{\shift} \left(\uF^\dag; \Opg(\altsol)\right) 
+ \eta \calT_{\shift} \left(\uF^\dag; \Opg(\sol)\right)}
\end{equation}
with $\eta$ sufficiently small and $C_{tcc}>1$. 
We also set $\calS_{\shift}(\Phi_n;v):=\infty$ if $v\geq -\shift/2$. 
Then we can show in analogy to  \cite{HW:13}:
\begin{theorem}\label{theo:convergencerate}
Let assumptions \eqref{eq:source_condition}, \eqref{eq:KL-tcc} hold true. 
If $\solhatk$ is defined by the iteratively regularized Newton method \eqref{eq:Gauss-Newton} where $k \in \N$ is the largest index such that
\begin{equation}\label{eq:apriori_alpha}
\alpha_k^{-1} \le \sup -\partial(- \Lambda)\left(\frac{2\rho}{\sqrt{n}}\right),
\end{equation}
then
\begin{equation}\label{eq:apriori_rates}
\EW{D_\calR^{\sol^*}(\muhatk, \sol^\dag)} = \mathcal{O}\left(\Lambda\left(\frac{1}{\sqrt{n}}\right)\right).
\end{equation}
\end{theorem}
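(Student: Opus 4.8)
The plan is to carry over the deterministic convergence analysis of the iteratively regularized Newton method from \cite{HW:13}, feeding in the statistical fluctuations only through Corollary \ref{cor:conc}. For a threshold $\rho \ge R C_{\rm c}$ I would restrict attention to the event on which $\mathrm{err} \le \rho/\sqrt n$, with $\mathrm{err}$ the deviation introduced above. By Corollary \ref{cor:conc}, applied to the family $\varphi = -\ln\frac{v+\shift}{\uF^\dag+\shift}$ (whose $H^s$-norm is bounded uniformly over $v\in\Opg(\frakB)$ and, crucially, also over the linearized residuals $\Opg(\solhatkm)+\Opg'[\solhatkm](\sol-\solhatkm)$ arising in \eqref{eq:Gauss-Newton}, which stay bounded in $H^s$ since $\frakB$ and $\Opg'$ are bounded), its complement has probability at most $\exp(-\rho/(RC_{\rm c}))$. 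On the good event $\calS_\shift(\Phi_n;\cdot)$ coincides with the deterministic fidelity $\calT_\shift(\uF^\dag;\cdot)$ up to an additive error of order $\rho/\sqrt n$, so the analysis reduces to a deterministic regularization problem with effective noise level $\delta\sim\rho/\sqrt n$.

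On this event I would exploit the minimizing property of $\solhatk$. Because $\calS_\shift(\Phi_n;\cdot)$ is convex and $\Opg$ is linearized at $\solhatkm$, the functional in \eqref{eq:Gauss-Newton} is convex, so testing with the admissible point $\sol^\dag\in\frakB$ gives
\[
\calS_\shift\big(\Phi_n; \Opg(\solhatkm)+\Opg'[\solhatkm](\solhatk-\solhatkm)\big) + \alpha_k\calR(\solhatk) \le \calS_\shift\big(\Phi_n; \Opg(\solhatkm)+\Opg'[\solhatkm](\sol^\dag-\solhatkm)\big) + \alpha_k\calR(\sol^\dag).
\]
Replacing $\calS_\shift(\Phi_n;\cdot)$ by $\calT_\shift(\uF^\dag;\cdot)$ on both sides, at the cost of two deviations each controlled by $\mathrm{err}\le\rho/\sqrt n$ (the term $\calS_\shift(\Phi_n;\uF^\dag)$ cancels), isolates the stochastic contribution and leaves a purely deterministic comparison between the linearized fidelities at $\solhatk$ and at $\sol^\dag$.

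Next I would apply the tangential cone condition \eqref{eq:KL-tcc}. Its upper half bounds the linearized fidelity at $\sol^\dag$ by $C_{\rm tcc}\,\calT_\shift(\uF^\dag;\uF^\dag)+\eta\,\calT_\shift(\uF^\dag;\Opg(\solhatkm)) = \eta\,\calT_\shift(\uF^\dag;\Opg(\solhatkm))$, since $\calT_\shift(\uF^\dag;\uF^\dag)=0$, while its lower half bounds the linearized fidelity at $\solhatk$ from below by $\frac1{C_{\rm tcc}}\calT_\shift(\uF^\dag;\Opg(\solhatk))-\eta\,\calT_\shift(\uF^\dag;\Opg(\solhatkm))$. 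Inserting these into the minimizing inequality and using the variational source condition \eqref{eq:source_condition} to turn the penalty difference $\alpha_k(\calR(\solhatk)-\calR(\sol^\dag))$ into a bound on $\alpha_k\beta D_\calR^{\sol^*}(\solhatk,\sol^\dag)$ modulo $\Lambda(\calT_\shift(\uF^\dag;\Opg(\solhatk)))$, I obtain a recursion coupling the Bregman distance and the data fidelity at step $k$ to those at step $k-1$, with $\eta$, $C_{\rm tcc}$, the ratio $\alpha_{k-1}/\alpha_k$ and the noise level $\rho/\sqrt n$ as coefficients. Iterating this recursion and substituting the stopping index fixed by \eqref{eq:apriori_alpha} yields $D_\calR^{\sol^*}(\solhatk,\sol^\dag)=\mathcal{O}(\Lambda(1/\sqrt n))$ on the good event.

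Finally I would pass to the expectation \eqref{eq:apriori_rates} by splitting $\EW{D_\calR^{\sol^*}(\solhatk,\sol^\dag)}$ over the good event and its complement and integrating the tail estimate $\exp(-\rho/(RC_{\rm c}))$ over $\rho$; since the Bregman distance grows at most polynomially in $\rho$ while the bad event is exponentially unlikely, that contribution is negligible against $\Lambda(1/\sqrt n)$. I expect the third step to be the crux: one must choose $\eta$ small enough (relative to $C_{\rm tcc}$ and to $\sup\alpha_{k-1}/\alpha_k$) that the recursion genuinely contracts, and one must control the interaction of the two-sided Kullback--Leibler estimates in \eqref{eq:KL-tcc} with the concavity of $\Lambda$ in the source condition. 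This is precisely the place where the argument of \cite{HW:13}, originally built on a Poisson concentration inequality, must be reverified line by line with Corollary \ref{cor:conc} in its place.
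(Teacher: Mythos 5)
Your proposal is correct and follows essentially the same route the paper takes: the paper gives no self-contained proof but defers to the argument of \cite{HW:13} with the Poisson concentration inequality replaced by Corollary~\ref{cor:conc}, which is exactly the adaptation you describe (restriction to the good event via the deviation bound, minimality of the Newton iterate tested against $\sol^\dag$, the two-sided condition \eqref{eq:KL-tcc} together with the source condition \eqref{eq:source_condition} to obtain a contracting recursion, and integration of the exponential tail to pass to the expectation). The one point worth keeping explicit in a written-out version is the one you already flag yourself: the concentration bound must hold uniformly over the linearized residuals $\Opg(\solhatkm)+\Opg'[\solhatkm](\sol-\solhatkm)$ and not merely over $\Opg(\frakB)$, which requires the $H^s$-boundedness and the lower bound $v\geq-\shift/2$ enforced by the definition of $\calT_\shift$.
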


\begin{remark}
\begin{enumerate}
\item Related results exist for the iteratively regularized Gau{\ss}-Newton method with $L^2$ data fidelity term. Instead of \eqref{eq:KL-tcc}, these theorems assume the $L^2$ tangential cone condition \eqref{eq:L2tcc}. Results like this were proven by Kaltenbacher and Hofmann \cite{KH:10}, Hohage and Werner \cite{HW:13}, or Dunker et al. \cite{DFHJM:13}. The convergence rates for quadratic data fidelity terms compare to the rates  in \eqref{eq:apriori_rates}.

\item The selection rule \eqref{eq:apriori_alpha} uses \textit{a priori} information about the index function $\Lambda$ which is usually not available in practice. It was shown in \cite{HW:13} that a data driven Lepski{\u\i} type parameter choice can be used instead. Only a logarithmic factor gets lost in the resulting convergence rate:
\[
\EW{D_\calR^{\sol^*}(\widehat\sol_{k_{\mathrm{Lepskii}}}, \sol^\dag)} 
=\mathcal{O}\left(\Lambda\left(\frac{\ln(n^{-1})}{\sqrt{n}}\right)\right).
\]
\end{enumerate}
\end{remark}

\section{Convergence of the drift estimator}\label{sec:convergence_sde}
In order to apply Theorems \ref{theo:tikh} and 
\ref{theo:convergencerate} to the drift estimation problem with 
Poisson data we have to discuss the assumptions 
\eqref{eq:source_condition} and \eqref{eq:KL-tcc}. 
For this purpose we need the following estimates for the Kullback-Leibler divergence:  
\begin{lemma}\label{lem:L2-KL}
The inequality 
\begin{equation}\label{eq:KL_lower_bound}
\|\varphi - \psi\|_{L^2}^2 \le \left(\frac{2}{3}\|\varphi\|_\infty + \frac{4}{3} \|\psi\|_\infty \right) \KL(\varphi;\psi).
\end{equation}
holds for all nonnegative functions $\varphi, \psi \in L^\infty(\manifold)$ with 
$\varphi - \psi \in  L^2(\manifold)$. 
If $\psi$ is bounded away from $0$ then 
\begin{equation}\label{eq:KL_upper_bound1}
\KL(\varphi;\psi) \leq \left\|\frac{1}{\psi}\right\|_\infty \|\varphi - \psi\|_{L^2}^2.
\end{equation}
\end{lemma}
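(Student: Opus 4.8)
The plan is to reduce both estimates to \emph{pointwise} inequalities between the integrands and then integrate over $\manifold$, pulling the relevant sup-norm factors out of the integral at the end. Throughout I abbreviate $a:=\varphi(x)$, $b:=\psi(x)$ and write $\phi(a,b):=b-a-a\ln(b/a)$ for the (nonnegative) integrand of $\KL(\varphi;\psi)$, keeping the conventions $0\ln 0:=0$ and $\ln x:=-\infty$ for $x\le 0$ in mind for the degenerate cases $a=0$ or $b=0$.

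For the upper bound \eqref{eq:KL_upper_bound1} I would first establish the pointwise inequality $\phi(a,b)\le (a-b)^2/b$ whenever $b>0$. Setting $s:=a/b\ge 0$ and dividing by $b$, this is equivalent to $1-s+s\ln s\le (1-s)^2$, i.e.\ to
\[
s\,(s-1-\ln s)\ge 0,
\]
which is immediate from the elementary inequality $\ln s\le s-1$ together with $s\ge 0$. Integrating the bound $\phi(a,b)\le (a-b)^2/b\le \left\|1/\psi\right\|_\infty (a-b)^2$ over $\manifold$ then yields \eqref{eq:KL_upper_bound1}, the hypothesis that $\psi$ is bounded away from $0$ guaranteeing $\left\|1/\psi\right\|_\infty<\infty$.

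For the lower bound \eqref{eq:KL_lower_bound} the target pointwise inequality is $(a-b)^2\le \left(\frac{2}{3}a+\frac{4}{3}b\right)\phi(a,b)$, since integrating this and bounding $\frac{2}{3}a+\frac{4}{3}b\le \frac{2}{3}\|\varphi\|_\infty+\frac{4}{3}\|\psi\|_\infty$ pointwise (using $\phi\ge 0$) gives \eqref{eq:KL_lower_bound}. Assuming $a>0$ and setting $t:=b/a>0$, after dividing by $a^2$ this reduces, via $\phi(a,b)=a(t-1-\ln t)$, to the one-variable inequality $F(t)\ge 0$ with
\[
F(t):=\left(\frac{2}{3}+\frac{4}{3}t\right)(t-1-\ln t)-(1-t)^2 .
\]
I would prove $F\ge 0$ by differentiation: a short computation gives $F'(t)=\frac{2}{3}\left(t-\frac{1}{t}-2\ln t\right)$, and the bracket has derivative $1+\frac{1}{t^2}-\frac{2}{t}=\left(1-\frac{1}{t}\right)^2\ge 0$, so it is increasing and vanishes at $t=1$. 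Hence $F'$ changes sign from $-$ to $+$ at $t=1$, which is therefore the global minimizer of $F$ on $(0,\infty)$ with $F(1)=0$, giving $F\ge 0$. The degenerate case $a=0$ reduces to $b^2\le \frac{4}{3}b^2$ and is trivial, while $b=0$ makes the right-hand side infinite.

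The routine parts are the two substitutions and the final integration; the main obstacle is the lower bound, where the specific constants $\frac{2}{3},\frac{4}{3}$ are exactly what is needed for the clean factorization $F'(t)=\frac{2}{3}\left(t-1/t-2\ln t\right)$ and the perfect-square second derivative to force the minimum at $t=1$ with value $0$. A larger $L^2$-to-$\KL$ comparison constant would be easy, but matching these sharp coefficients is the delicate point.
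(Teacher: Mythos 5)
Your proof is correct. For the upper bound \eqref{eq:KL_upper_bound1} you follow essentially the same route as the paper: the paper starts from $x-1\ge \ln x$, rearranged into $(x-1)^2\ge x\ln x - x+1$ with $x=\varphi/\psi$, and integrates; your reduction $1-s+s\ln s\le(1-s)^2 \Leftrightarrow s(s-1-\ln s)\ge 0$ with $s=\varphi/\psi$ is the identical computation. The genuine difference is in the lower bound \eqref{eq:KL_lower_bound}: the paper does not prove it but simply cites it from the literature (\cite{BL:91}), whereas you give a self-contained derivation via the pointwise inequality $(a-b)^2\le\left(\frac{2}{3}a+\frac{4}{3}b\right)\left(b-a-a\ln(b/a)\right)$, reduced to the one-variable claim $F(t)\ge 0$ with $t=b/a$. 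I checked the key computations: $F'(t)=\frac{2}{3}\left(t-\frac{1}{t}-2\ln t\right)$ is correct, the bracket indeed has derivative $(1-1/t)^2\ge 0$ and vanishes at $t=1$, so $t=1$ is the global minimum with $F(1)=0$; the degenerate cases $a=0$ (giving $b^2\le\frac{4}{3}b^2$) and $b=0$ are handled properly, and passing from the pointwise bound to the integrated one is legitimate because the $\KL$ integrand is nonnegative, so $\frac{2}{3}a+\frac{4}{3}b$ may be replaced by $\frac{2}{3}\|\varphi\|_\infty+\frac{4}{3}\|\psi\|_\infty$ before integrating. Your approach buys a fully elementary, self-contained proof with the sharp constants $\frac{2}{3},\frac{4}{3}$, and in fact establishes a pointwise refinement that is slightly stronger than the integrated statement the paper imports; the cost is a page of calculus that the paper outsources to a reference.
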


\begin{proof}
The lower bound can be found e.g.\ in \cite{BL:91}. 
The upper bound follows from the simple estimation $x-1 \ge \ln x$ 
which entails $(x-1)^2 \ge x \ln x - x + 1$. Setting $x = \varphi / \psi$ 
we get
\[
\frac{1}{\psi}(\varphi - \psi)^2 \ge \psi - \varphi - \varphi \ln \left(\frac{\psi}{\varphi}\right).
\]
Integrating this inequality over $\manifold$ and using 
$(1/\psi)(\varphi - \psi)^2\leq \|1/\psi\|_{\infty}(\varphi - \psi)^2$ 
yields \eqref{eq:KL_upper_bound1}.
\end{proof}

\begin{proposition}
Let $s>d/2+1$,  $\shift>0$, and assume that $\manifold$ and $\sigma$ are smooth.  
Then for every $\bmu^{\dag}\in H^s(\manifold;\mathbb{R}^d)$  there exists a 
ball $\mathfrak{B}\subset \{\bmu:\|\bmu-\bmu^{\dag}\|_{H^s}< \rho\}$ such 
that $F$ satisfies the Kullback-Leibler tangential cone condition 
\eqref{eq:KL-tcc} in $\mathfrak{B}$.  
\end{proposition}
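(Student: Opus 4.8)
The plan is to reduce the Kullback--Leibler tangential cone condition \eqref{eq:KL-tcc} to the strong $L^2$ tangential cone condition \eqref{eq:tang_cone_Fokker} of Theorem~\ref{the:elliptic}, using the two-sided comparison between $\KL$ and the squared $L^2$ distance from Lemma~\ref{lem:L2-KL}. Throughout I abbreviate $a:=F(\altsol)$, $b:=F(\sol)$, $\bh:=\altsol-\sol$, and $\ell:=F(\sol)+F'[\sol]\bh$, and I write $\uF^\dag=F(\bmu^\dag)$; recall $\calT_\shift(\uF^\dag;v)=\KL(\uF^\dag+\shift;v+\shift)$ for $v\ge-\shift/2$. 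Since $s>d/2$ gives $H^s(\manifold)\hookrightarrow L^\infty(\manifold)$ and $\sup_{\sol\in\frakB}\|F(\sol)\|_{H^s}\le R$, the shifted densities $a+\shift,b+\shift,\uF^\dag+\shift$ are uniformly bounded in $L^\infty$ and bounded below by $\shift$. First I would fix the ball $\frakB=\{\sol:\|\sol-\bmu^\dag\|_{H^s}<r\}$ and shrink the radius $r$ at the end so as to make the relevant error constants small.

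The first, and main, technical step is a pointwise ($L^\infty$) bound on the linearization $F'[\sol]\bh$, since the energy estimate behind \eqref{eq:stationary_der} only controls $\|\uF'_{\sol,\bh}\|_{H^1}$, and $H^1$ does not embed into $L^\infty$ for $d\ge2$. Here the smoothness of $\manifold$ and $\sigma$ and the assumption $s>d/2+1$ enter decisively: \eqref{eq:stationary_der} is an elliptic equation with right-hand side $\Div(F(\sol)\,\bh)$, and since $H^s$ is a Banach algebra for $s>d/2$ one has $F(\sol)\,\bh\in H^s$ with $\|F(\sol)\,\bh\|_{H^s}\le C\,R\,\|\bh\|_{H^s}$; the condition $s>d/2+1$ makes $H^{s-1}$ an algebra as well, which is what the elliptic bootstrap for the $H^s$-drift coefficient needs. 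Elliptic regularity on the smooth domain then yields $\uF'_{\sol,\bh}\in H^{s+1}$, whence
\[
\|F'[\sol]\bh\|_\infty\le C\|F'[\sol]\bh\|_{H^s}\le C'\|\bh\|_{H^s}
\]
uniformly for $\sol\in\frakB$. Choosing $r$ small enough gives $\|F'[\sol]\bh\|_\infty\le\shift/2$ for all $\sol,\altsol\in\frakB$, so $\ell\ge F(\sol)-\shift/2\ge-\shift/2$; hence $\calT_\shift(\uF^\dag;\ell)$ is finite, $\ell+\shift\ge\shift/2$ is bounded away from $0$, and $\|\ell+\shift\|_\infty$ is uniformly bounded. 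I would also record that $\tilde C_\sol=\|\tilde L_\sol^{-1}\|$ is uniformly bounded on $\frakB$, which follows from the G\aa rding inequality proved above whose constants depend only on $\|\sol\|_\infty$.

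With these bounds, Lemma~\ref{lem:L2-KL} supplies uniform constants $c_1,c_2>0$ with $\frac{1}{c_1}\|\uF^\dag-v\|_{L^2}^2\le\calT_\shift(\uF^\dag;v)\le c_2\|\uF^\dag-v\|_{L^2}^2$ for each $v\in\{a,b,\ell\}$ (the lower bound is \eqref{eq:KL_lower_bound}, the upper bound is \eqref{eq:KL_upper_bound1}, both applicable because the shifted arguments are bounded and $\ge\shift/2$). It therefore suffices to prove the two-sided estimate in the $L^2$ metric. Setting $e:=a-\ell=F(\altsol)-F(\sol)-F'[\sol]\bh$, the strong tangential cone condition \eqref{eq:tang_cone_Fokker} gives $\|e\|_{L^2}\le\tilde C_\sol\|\bh\|_\infty\|a-b\|_{L^2}\le\epsilon\|a-b\|_{L^2}$, where $\epsilon:=\sup_{\frakB}\tilde C_\sol\cdot\sup_{\frakB}\|\bh\|_\infty$ is small for small $r$ (using $\|\bh\|_\infty\le C\|\bh\|_{H^s}\le2Cr$). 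Two applications of the triangle inequality $\|a-b\|_{L^2}\le\|\uF^\dag-a\|_{L^2}+\|\uF^\dag-b\|_{L^2}$, followed by Young's inequality $(x+y)^2\le(1+\delta)x^2+(1+\delta^{-1})y^2$ with $\delta=\epsilon$, yield
\[
\|\uF^\dag-\ell\|_{L^2}^2\le(1+\epsilon)^3\|\uF^\dag-a\|_{L^2}^2+(\epsilon+\epsilon^2)\|\uF^\dag-b\|_{L^2}^2,
\]
and, from $(1-\epsilon)\|\uF^\dag-a\|_{L^2}\le\|\uF^\dag-\ell\|_{L^2}+\epsilon\|\uF^\dag-b\|_{L^2}$, a matching lower estimate with constants tending to $1$ and $0$ as $\epsilon\to0$. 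Composing these $L^2$ inequalities with the $\KL$--$L^2$ equivalence produces \eqref{eq:KL-tcc} with $C_{\rm tcc}$ of size $c_1c_2>1$ and $\eta=\mathcal{O}(\epsilon)$; finally I would shrink $r$ so that $\eta$ falls below the prescribed threshold. The main obstacle is the elliptic-regularity argument of the second paragraph: without the $L^\infty$ control of the linearization one can neither guarantee $\ell\ge-\shift/2$ (so that the middle term of \eqref{eq:KL-tcc} is finite) nor invoke Lemma~\ref{lem:L2-KL} for $v=\ell$; everything else reduces to triangle and Young inequalities.
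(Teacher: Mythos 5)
Your proposal is correct, and its overall architecture coincides with the paper's: both proofs reduce the Kullback--Leibler condition \eqref{eq:KL-tcc} to the strong $L^2$ tangential cone condition of Theorem~\ref{the:elliptic} via the two-sided comparison of $\KL$ with the squared $L^2$ distance in Lemma~\ref{lem:L2-KL}, which requires uniform $L^\infty$ bounds (and uniform positivity after the shift $\shift$) for $F(\sol)$ and the linearization on the ball $\frakB$. The differences are in how the two sub-steps are carried out. For the passage from the strong tangential cone condition to the two-sided inequality you argue by hand with triangle and Young inequalities, where the paper simply cites the equivalence from \cite[Lemma 5.2]{HW:13}; your version is more self-contained and makes the dependence $\eta=\mathcal{O}(\epsilon)$ explicit. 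For the crucial uniform sup-norm bounds you use $H^{s+1}$ elliptic regularity on the smooth domain together with the algebra property of $H^s$, whereas the paper establishes continuous differentiability of $F$ from $C^{1,\beta}(\overline{\manifold},\mathbb{R}^d)$ to $C^{2,\beta}(\overline{\manifold})$ via Schauder estimates and then exploits the \emph{compactness} of the $H^s$-ball in $C^{1,\beta}$, so that $\bmu\mapsto\|F(\bmu)\|_{L^\infty}$ and $\bmu\mapsto\|F'[\bmu]\|_{C^{1,\beta}\to L^\infty}$ are bounded as continuous functions on a compact set. The paper's soft compactness argument spares it from tracking elliptic constants uniformly in $\bmu$; your quantitative route gives explicit dependence on $\|\bh\|_{H^s}$ but obliges you to justify that the regularity constants and $\tilde C_{\sol}=\|\tilde L_{\sol}^{-1}\|$ are uniform over $\frakB$ --- your appeal to the G\aa rding inequality alone does not quite give the latter (G\aa rding yields Fredholmness, not a quantitative inverse bound), but a Neumann-series perturbation of $\tilde L_{\bmu^\dag}^{-1}$ for $\|\sol-\bmu^\dag\|_\infty$ small closes this gap in one line.
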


\begin{proof}
As shown in \cite[Lemma 5.2]{HW:13}, the classical tangential cone condition 
\eqref{eq:tang_cone_Fokker} is equivalent to 
\begin{eqnarray*}
\frac{1}{C} \norm{\uF^\dag- \Opg(\altsol)}_{L^2}
- \tilde{\eta} \norm{\uF^\dag-\Opg(\sol)}_{L^2} 
&\leq \norm{\uF^\dag- \Opg(\sol) - \Opg'[\sol](\altsol-\sol)}_{L^2}\\
&\leq C \norm{\uF^\dag-\Opg(\altsol)}_{L^2} 
+ \tilde{\eta} \norm{\uF^\dag-\Opg(\sol)}_{L^2}
\end{eqnarray*}
for some constants $\tilde{\eta},C>0$ and all $\sol,\altsol\in\mathfrak{B}$.

Next we are going to show that $F$ is also continuously differentiable as a 
mapping from the H\"older space 
$C^{1,\beta}(\overline{\manifold},\mathbb{R}^d)\to L^{\infty}(\manifold)$. 
Note that the solution $u$ to \eqref{eq:stationary_weak} satisfies 
\[
\mymatrix{c c}{\tilde{L}_{\bmu} & \mathds{1}\\ \mathds{1}^{*} &0} 
\mymatrix{c}{u\\ \lambda} = \mymatrix{c}{0 \\ 1}, 
\]
where $\mathds{1}$ maps a constant $\lambda\in\mathbb{R}$ to the constant function 
with value $\lambda$ on $\manifold$, and $\mathds{1}^{*}$ is its $L^2$-adjoint. 
By Schauder estimates (see e.g.\ \cite{GT:77}) the (block-)operator 
as a mapping from $C^{2,\beta}(\overline{\manifold})\times \mathbb{R}\to 
C^{0,\beta}(\overline{\manifold})\times \mathbb{R}$ has a bounded inverse 
if $\bmu\in C^{1,\beta}(\overline{\manifold},\mathbb{R}^d)$.  
Since the block operator depends continuously and affinely linear on $\bmu$ 
in these topologies and since the operator inversion is continuously differentiable, $F$ is continuously Fr\'echet differentiable from 
$C^{1,\beta}(\overline{\manifold},\mathbb{R}^d)$ to 
$C^{2,\beta}(\overline{\manifold})$ 
and hence from $C^{1,\beta}(\overline{\manifold},\mathbb{R}^d)$ to 
$L^{\infty}(\manifold)$.  

Choose $0<\beta<s-d/2$. Then every ball $\mathfrak{B}$ 
in $H^s(\manifold)$ is compact in $C^{1,\beta}(\overline{\manifold})$, and 
the mappings $\bmu\mapsto \|F(\bmu)\|_{L^{\infty}}$ and 
$\bmu\mapsto \|F'[\bmu]\|_{C^{1,\beta}\to L^{\infty}}$ are bounded 
on $\mathfrak{B}$ as continuous functions on a compact set. 
Together with Lemma~\ref{lem:L2-KL} this implies 
\eqref{eq:KL-tcc} after possibly decreasing the radius of $\mathfrak{B}$. 
\end{proof}

\begin{proposition}
If $\mathcal{R}(\bmu)= \|\bmu\|_{H^s}^2$ with $s>d/2+1$, then 
every $\bmu^{\dag}\in H^s(\manifold;\mathbb{R}^d)$ satisfies a 
variational source condition of the form \eqref{eq:source_condition} in 
some $H^s$-ball. 
\end{proposition}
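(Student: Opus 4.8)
The plan is to verify the variational source condition \eqref{eq:source_condition} for the concrete penalty $\calR(\bmu)=\norm{\bmu}_{H^s}^2$ by first collapsing it to a purely \emph{linear} source condition for the linearization $K:=\Opg'[\bmu^\dag]$ and then establishing that linear condition by an approximate source condition argument. Since $\calR$ is a squared Hilbert space norm, its subdifferential at $\bmu^\dag$ is the singleton $\{\bmu^*\}$ with $\bmu^*=2\bmu^\dag$ under the Riesz identification, so the Bregman distance collapses to $D_\calR^{\bmu^*}(\bmu,\bmu^\dag)=\norm{\bmu-\bmu^\dag}_{H^s}^2$. Writing $\calR(\bmu)-\calR(\bmu^\dag)=\norm{\bmu-\bmu^\dag}_{H^s}^2+2\langle\bmu^\dag,\bmu-\bmu^\dag\rangle_{H^s}$ and fixing $\beta\in(0,1)$, the condition \eqref{eq:source_condition} is equivalent to the scalar inequality
\[
2\langle\bmu^\dag,\bmu^\dag-\bmu\rangle_{H^s}-(1-\beta)\norm{\bmu-\bmu^\dag}_{H^s}^2\le \Lambda\paren{\KL\paren{\uF^\dag+\shift;\Opg(\bmu)+\shift}}
\]
required for all $\bmu\in\frakB$.

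Next I would linearize the right hand side. By the lower bound \eqref{eq:KL_lower_bound} of Lemma~\ref{lem:L2-KL}, applied with $\varphi=\uF^\dag+\shift$ and $\psi=\Opg(\bmu)+\shift$, the Kullback--Leibler term dominates $c\,\norm{\Opg(\bmu)-\uF^\dag}_{L^2}^2$ with a constant $c>0$ uniform on $\frakB$ (the needed $L^\infty$ bounds are available because $\frakB$ is a bounded $H^s$-ball and $H^s\hookrightarrow L^\infty$). As $\Lambda$ is increasing it therefore suffices to bound the left hand side by $\Lambda(c\,\norm{\Opg(\bmu)-\uF^\dag}_{L^2}^2)$. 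Shrinking the radius of $\frakB$ so that $\tilde C_{\bmu^\dag}\norm{\bmu-\bmu^\dag}_\infty\le\frac12$, the strong tangential cone condition \eqref{eq:tang_cone_Fokker} yields the two-sided comparison $\frac12\norm{K(\bmu-\bmu^\dag)}_{L^2}\le\norm{\Opg(\bmu)-\uF^\dag}_{L^2}\le 2\norm{K(\bmu-\bmu^\dag)}_{L^2}$, so up to constants I may replace $\norm{\Opg(\bmu)-\uF^\dag}_{L^2}$ by $\norm{Kh}_{L^2}$ with $h:=\bmu-\bmu^\dag$. The task is reduced to producing a concave index function $\Lambda$ with
\[
-2\langle\bmu^\dag,h\rangle_{H^s}-(1-\beta)\norm{h}_{H^s}^2\le\Lambda\paren{c\,\norm{Kh}_{L^2}^2}
\]
for all $h$ in a ball around $0$.

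To establish this linear condition I would invoke the theory of approximate source conditions for the compact operator $K=\Opg'[\bmu^\dag]:H^s\to L^2$. From \eqref{eq:stationary_der} one has $Kh=\tilde L_{\bmu^\dag}^{-1}\ell_h$ with $\ell_h(\vTest)=\int_\manifold\uF^\dag\,h\cdot\grad\vTest\,dx$; for smooth $\sigma$ and $\manifold$ the solution operator $\tilde L_{\bmu^\dag}^{-1}$ is smoothing by elliptic regularity, so $K$ is compact. Splitting $\langle\bmu^\dag,h\rangle_{H^s}=\langle\bmu^\dag-K^*w,h\rangle_{H^s}+\langle w,Kh\rangle_{L^2}$ and optimizing over $w$, the distance $w\mapsto\norm{\bmu^\dag-K^*w}_{H^s}$ together with $\norm{h}_{H^s}\le 2\rho$ produces a concave $\Lambda$; its rate is governed by the singular value decay of $K$ and the finite smoothness $s$ of $\bmu^\dag$, and because the elliptic solution operator is analytically smoothing I expect $\Lambda$ to be of logarithmic type.

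The main obstacle is the nontrivial kernel of $K$: one checks that $Kh=0$ precisely when $\Div(\uF^\dag h)=0$ with vanishing normal flux, so in dimension $d\ge2$ the drift is only identifiable modulo $\{(\uF^\dag)^{-1}J:\Div J=0,\ J\cdot\n=0\}$. For exact kernel directions $h$ one has $\norm{Kh}_{L^2}=0$ but $\norm{h}_{H^s}$ may be large, and the quadratic slack $-(1-\beta)\norm{h}_{H^s}^2$ cannot absorb the linear term $-2\langle\bmu^\dag,h\rangle_{H^s}$; hence the inequality can hold only when $\bmu^\dag\perp_{H^s}\ker K$, i.e.\ when $\bmu^\dag$ is the $\calR$-minimal (identifiable) solution. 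Once this orthogonality is in force the kernel component of $h$ drops out of both sides, $K$ restricts to an injective operator on $(\ker K)^\perp=\overline{\ran K^*}$, and the approximate source condition construction applies. Making the smoothing/spectral estimate for $K$ quantitative and exhibiting a single $\Lambda$ valid uniformly over the whole ball is the technically demanding step; the identifiability restriction on $\bmu^\dag$ is, in my reading, implicit in the statement.
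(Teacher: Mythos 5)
Your overall architecture coincides with the paper's: collapse the Bregman distance of $\calR=\norm{\cdot}_{H^s}^2$ to $\norm{\bmu-\bmu^\dag}_{H^s}^2$, pass from the $\KL$ term to $\norm{\Opg(\bmu)-\uF^\dag}_{L^2}^2$ via the lower bound \eqref{eq:KL_lower_bound} of Lemma~\ref{lem:L2-KL}, and use the tangential cone condition \eqref{eq:tang_cone_Fokker} to trade $\Opg(\bmu)-\uF^\dag$ for $K(\bmu-\bmu^\dag)$ with $K=\Opg'[\bmu^\dag]$, so that everything reduces to a variational source condition for the linearization. The one step you handle differently is that linear source condition itself: the paper invokes \cite{MH:08} for the existence of a spectral source condition $\bmu^\dag=\Theta(K^*K)w$ with \emph{some} index function $\Theta$ and \cite{flemming:12b} to convert it into a variational inequality, whereas you build $\Lambda$ directly from the approximate-source-condition splitting $\langle\bmu^\dag,h\rangle_{H^s}=\langle\bmu^\dag-K^*w,h\rangle_{H^s}+\langle w,Kh\rangle_{L^2}$ and optimization over $w$. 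These are equivalent in substance (the distance function $R\mapsto\inf_{\norm{w}\leq R}\norm{\bmu^\dag-K^*w}_{H^s}$ tends to zero exactly when $\bmu^\dag\in\overline{\ran K^*}=(\ker K)^\perp$, which is also the hypothesis under which the result of \cite{MH:08} applies), but yours is more self-contained and, more importantly, brings that hypothesis into the open. The caveat you raise is real: testing \eqref{eq:source_condition} with $\bmu=\bmu^\dag+th$ for $h\in\ker K$ and $t\to 0$ (the tangential cone condition forces $\Opg(\bmu^\dag+th)=\uF^\dag$ for such $h$, so the $\KL$ term vanishes) shows the condition can only hold if $\langle\bmu^\dag,h\rangle_{H^s}=0$; for $d\geq 2$ the kernel is nontrivial, so ``every $\bmu^\dag$'' must be read as every $\calR$-minimizing (identifiable) solution, while for $d=1$ the kernel is trivial and no restriction arises. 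The paper's proof silently inherits the same restriction through the hypothesis of \cite{MH:08}, so this is a clarification rather than a divergence. The only genuinely unfinished piece of your write-up is the uniformity over the ball of the $L^\infty$ bounds on $\Opg(\bmu)$ and of $\norm{\bmu-\bmu^\dag}_\infty$, which the paper obtains from the compact embedding $H^s\hookrightarrow C^{1,\beta}(\overline{\manifold})$ in the preceding proposition; this is routine and does not affect the argument.
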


\begin{proof}
Due to the results in \cite{MH:08}, $\bmu^{\dag}$ satisfies a spectral source condition 
\[
\bmu^{\dag} = \Theta(F'[\bmu^{\dag}]^*F'[\bmu^{\dag}])w
\]
for some $w\in H^s(\manifold;\mathbb{R}^d)$ and some index function $\Theta$. 
Therefore, $\bmu^{\dag}$ also 
satisfies a variational source condition for the linear operator 
$F'[\bmu^{\dag}]$
\[
\beta D_{\calR}^{\bmu}(\bmu,\, \bmu^\dag) \le \calR(\bmu) - \calR(\bmu^\dag) + \tilde{\Lambda}\Big(\|F'[\bmu]'(\bmu-\bmu^{\dag})\|^2_{L^2}\Big) 
\]
for all $\bmu \in H^s(\manifold;\mathbb{R}^d)$ 
with another index function $\tilde{\Lambda}$ (see \cite{flemming:12b}). 
Note that the $L^2$ tangential cone condition in Theorem \ref{the:elliptic} 
implies 
\[
\|F'[\bmu^{\dag}](\bmu-\bmu^{\dag})\|_{L^2} 
\leq (1+\tilde{C}_{\bmu^{\dag}}\|\bmu^{\dag}-\bmu\|_{\infty})
\|F(\bmu)-F(\bmu^{\dag})\|_{L^2}
\]
for all $\bmu\in H^s(\manifold;\mathbb{R}^d)$. Therefore, 
$\bmu^{\dag}$ also satisfies the variational 
source condition for the nonlinear operator $F$
\[
\beta D_{\calR}^{\bmu}(\bmu,\, \bmu^\dag) \le \calR(\bmu) - \calR(\bmu^\dag) + \tilde{\Lambda}\Big(4\|\uF^\dag - F(\bmu)\|^2_{L^2}\Big) 
\]
for all $\bmu\in H^s(\manifold;\mathbb{R}^d)$ with 
$\tilde{C}_{\bmu^{\dag}}\|\bmu-\bmu^{\dag}\|_{\infty}\leq 1$. 
Together with Lemma \ref{lem:L2-KL} and the continuous embedding of 
$H^s(\manifold,\mathbb{R}^d)$ in $L^{\infty}(\manifold,\mathbb{R}^d)$ 
this entails the $\KL$ related source condition \eqref{eq:source_condition}. 
\end{proof}

To sum up, all assumptions of Theorems \ref{theo:tikh} and \ref{theo:convergencerate} are satisfied 
for our problem. It would be interesting to have explicit 
characterizations of the index function $\Lambda$ when $\bmu$ satisfies 
certain classical smoothness conditions. We intend to address this question 
in future research.

\section{Numerical simulations}\label{sec:numerics}

\subsubsection*{Implementation.}

The implementation of the iteration scheme \eqref{eq:Gauss-Newton} requires the evaluation of the forward operator $F$ and its derivative $F'$. We did this for both operators by finite elements of degree 3. The convex minimization problem which occurs in every Newton step is solved by a nested Newton iteration as described 
in \cite{HW:13}. 

In addition to the iteration \eqref{eq:Gauss-Newton} we implemented the classical Gau{\ss}-Newton method with quadratic data fidelity term. As both methods were equipped with an $H^1$-quadratic penalty term, this setup allows for a comparison of the two methods. For the latter inversion scheme the minimization problem in every Newton step becomes quadratic and can be solved by a conjugate gradient method.

\subsubsection*{Test example.}
To test the algorithm we considered a one-dimensional stochastic differential 
equation \eqref{eq:sde} with diffusion $\sigma = 0.5$ and drift
\begin{equation}\label{eq:drift}
\mu^{\dagger}(x) = -5x^3-2x-0.25 \qquad \mbox{for }x\in [-1,1], 
\end{equation}
$\mu^{\dagger}(x) = \mu^{\dagger}(1)$ for $x \ge 1$, and 
$\mu^{\dagger}(x) = \mu^{\dagger}(-1)$ for $x \le -1$. 
The drift is plotted in Figures \ref{fig:recon1} and \ref{fig:recon2}.
We simulated a path of the stochastic process with the Euler-Maruyama method on a large time interval $[0,T]$ with $T = 1000$ and with $10^5$ Euler steps. But we used only $125$ to $1000$ points in the time domain as observations of the path. This drift \eqref{eq:drift} is rather large in absolute values for $x = 1$ and $x=-1$ with a negative sign for $x=-1$. When the path jumped outside $[-1,1]$ in the simulations it jumped back into the interval in a very small number of steps. The probability to have an observation of the simulated path outside of the interval is close to 0.
To implement the forward operator we used transparent boundary conditions at $-1$ 
and $1$ as described in section \ref{sec:Fokker-Planck}. 

\begin{center}
\includegraphics[width=0.70\textwidth]{./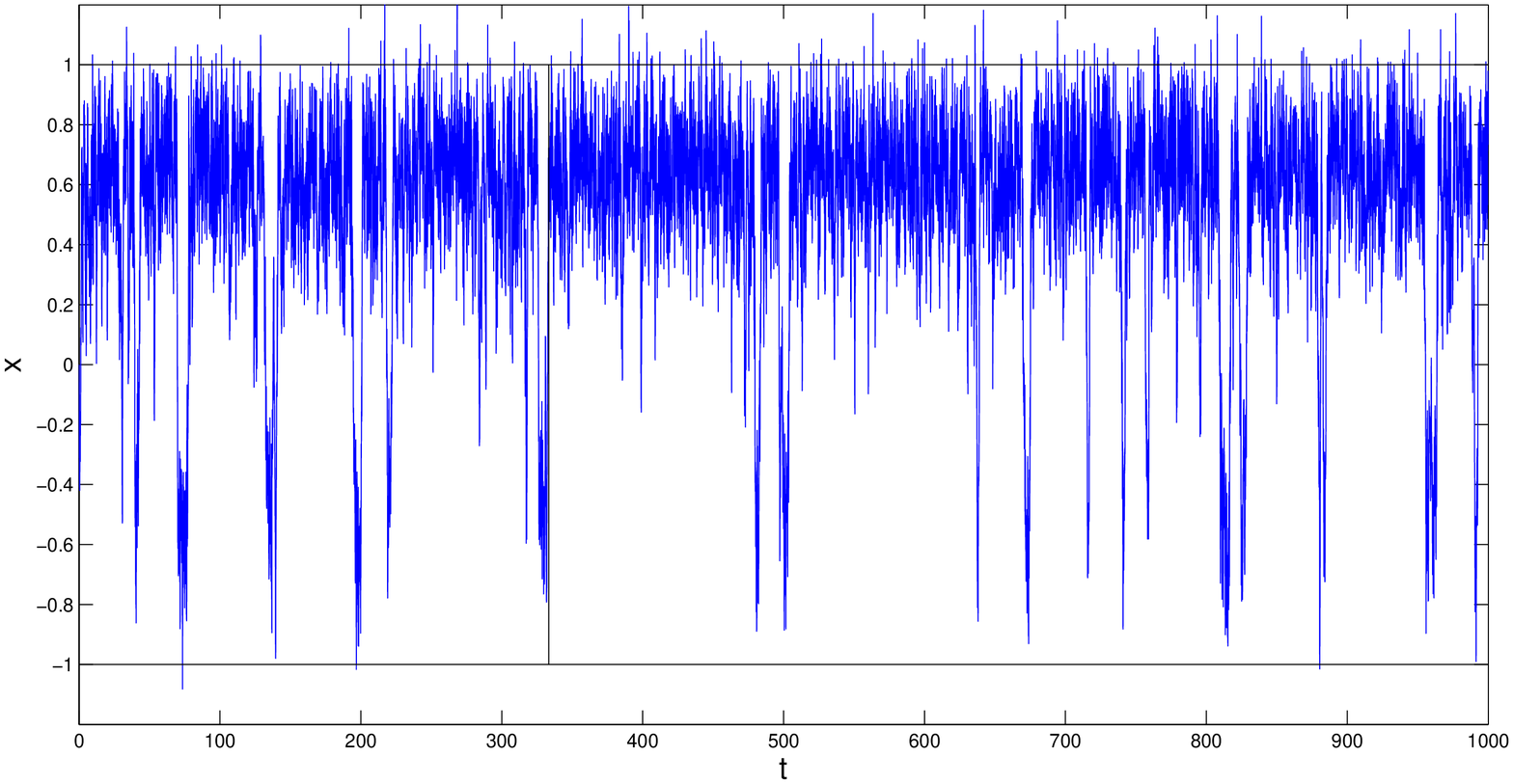} \hspace{-10mm}
\includegraphics[width=0.34\textwidth]{./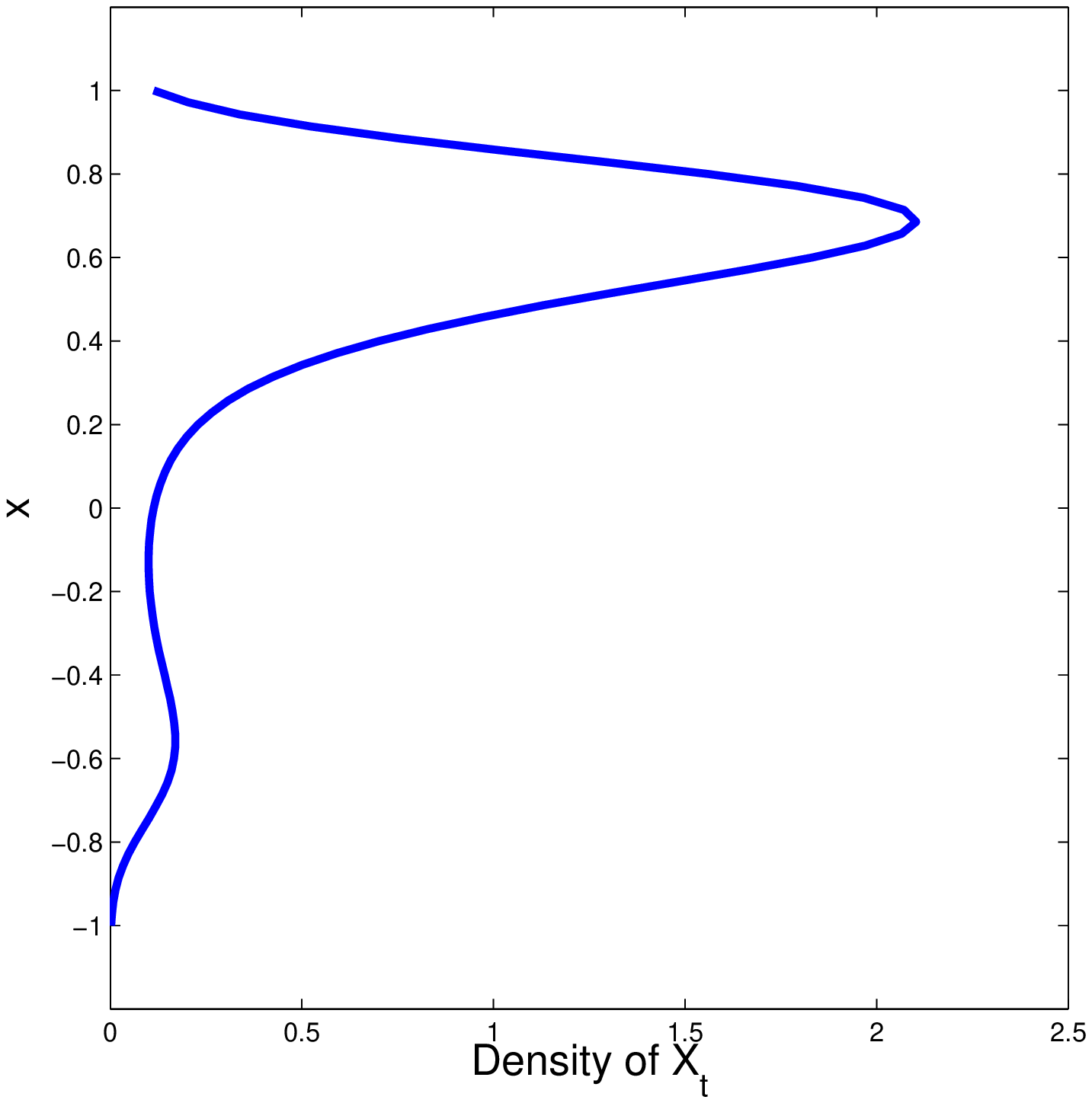}\vspace{-12pt} 
\captionof{figure}{\label{fig:path}A simulated path and the corresponding 
limit density of the process $X_t$ for $t \to \infty$.}
\end{center}


\subsubsection*{Results.}
We reconstructed the drift using 4 different numbers of observations of a path namely $125$, $250$, $500$, and $1000$ points. For each set of observations we reconstructed the drift using the iteratively regularized Newton method \eqref{eq:Gauss-Newton} with $\KL$ data fidelity term and additionally using the iteratively regularized Gau{\ss}-Newton method. In both reconstruction methods we assumed that the drift is known in semiinfinite intervals $(-\infty,-1]$ and $[1,\infty)$. 
Moreover, in order to compare both methods independent of a stopping rule, 
in both cases an oracle choice of the 
stopping index was used, i.e.\ the stopping index was chosen such that 
the average $L^2$-error was minimal. 

Due to the random error in the data, a statistic evaluation of the inversion methods is needed. For this purpose we repeated the procedure of simulating a path, drawing observations from it and conducting the estimations $1000$ times. The following histograms show the distribution of the $L^2$ error of both methods. The error is normalized in a way such that the error of the initial guess is $1$.
\begin{center}
%
%
%
\includegraphics[width=0.4\textwidth]{./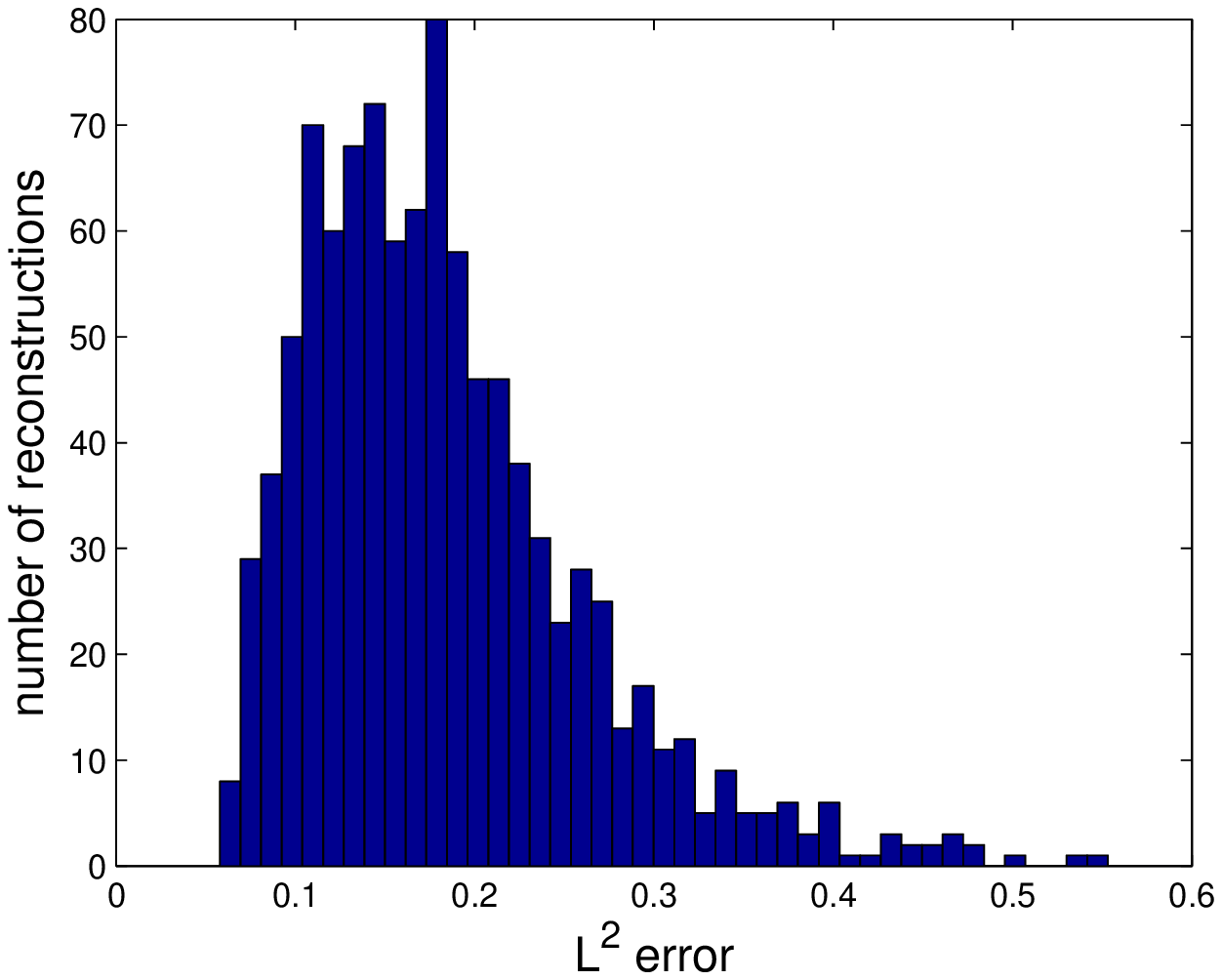}
\includegraphics[width=0.4\textwidth]{./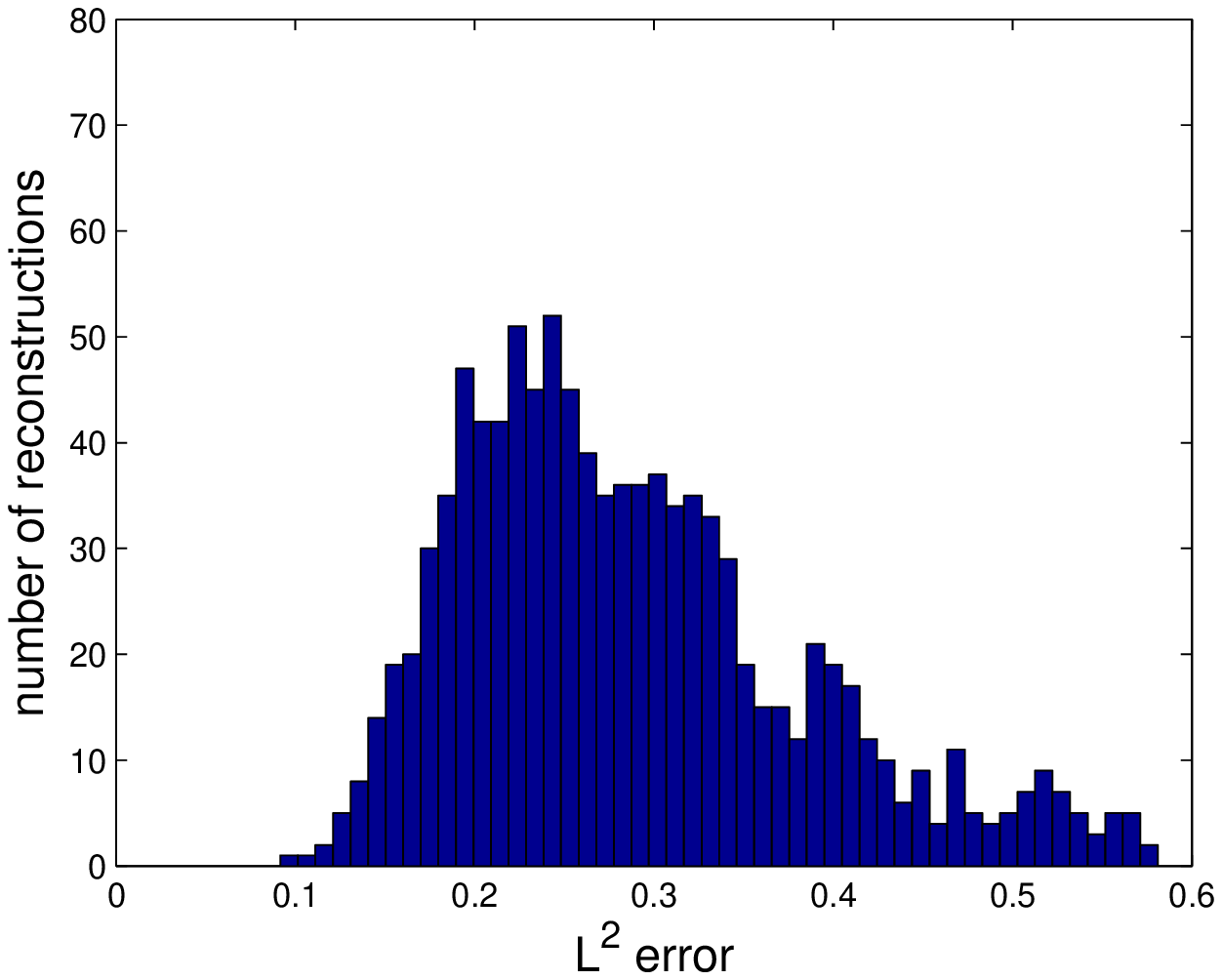}\vspace{-12pt} 
\captionof{figure}{$125$ observations of one path: $L^2$ error of reconstructions with $\KL$ (left) and $L^2$ (right) data fidelity term.}

\includegraphics[width=0.4\textwidth]{./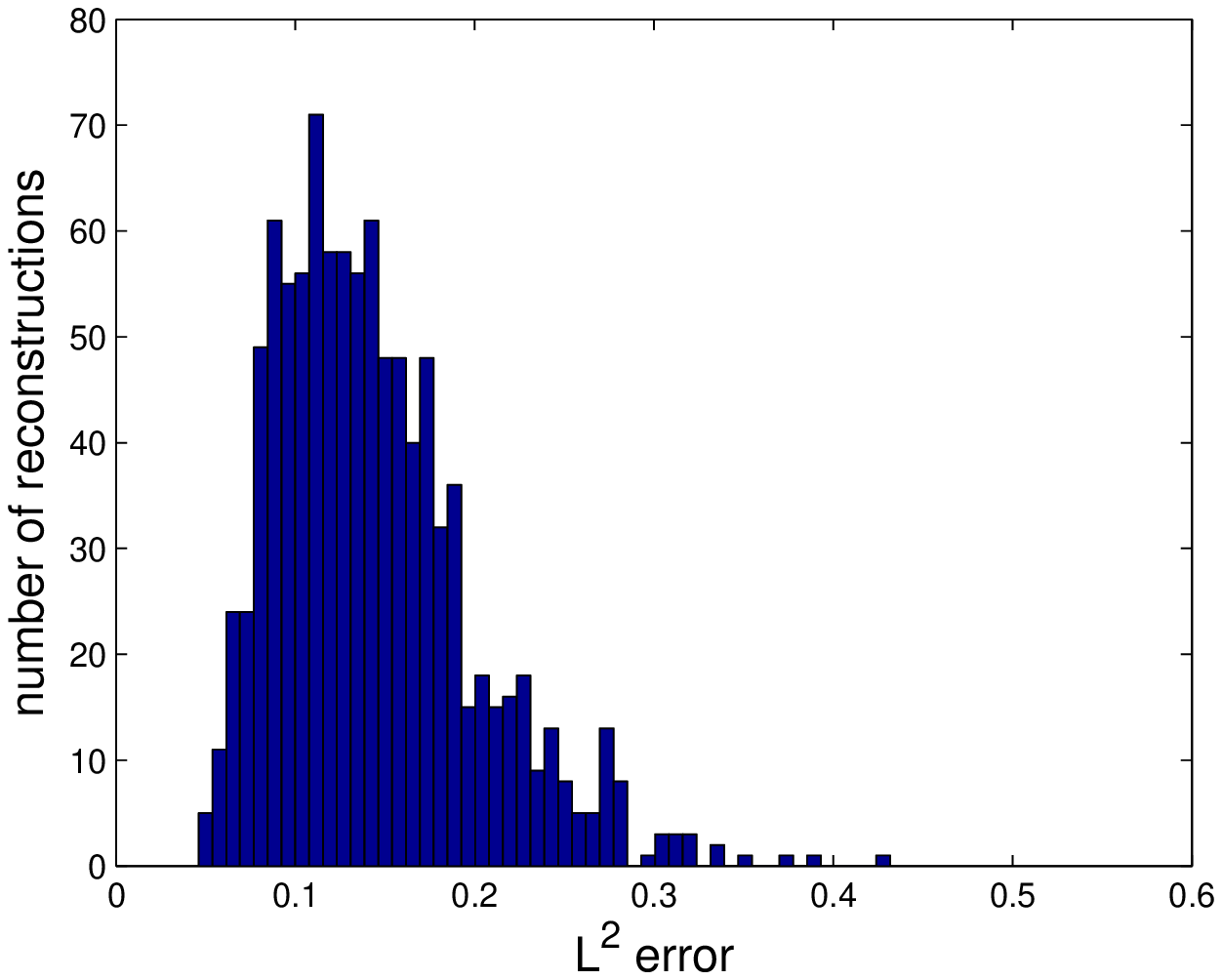}
\includegraphics[width=0.4\textwidth]{./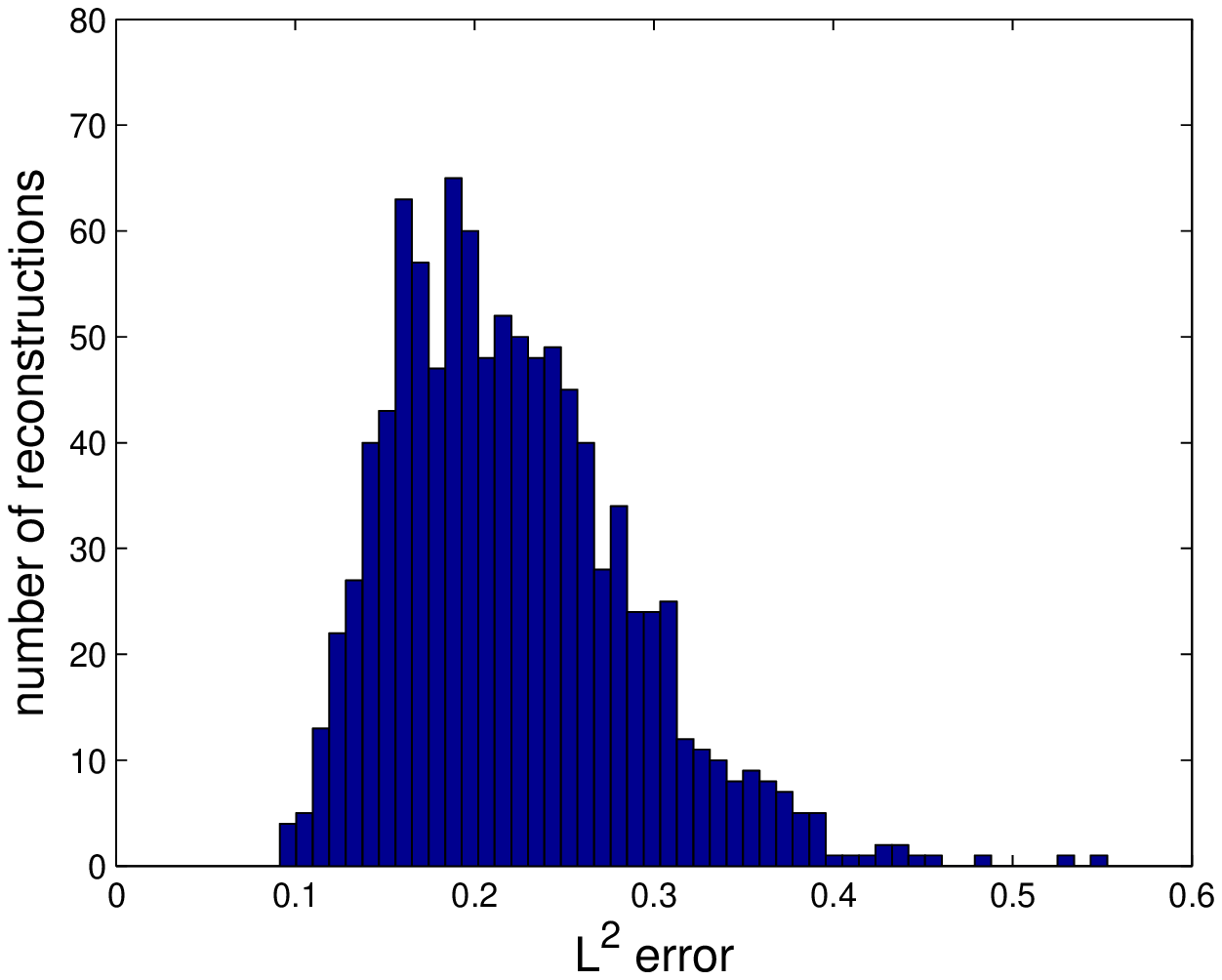} \vspace{-12pt} 
\captionof{figure}{$250$ observations of one path: $L^2$ error of reconstructions with $\KL$ (left) and $L^2$ (right) data fidelity term.}
\pagebreak
\includegraphics[width=0.4\textwidth]{./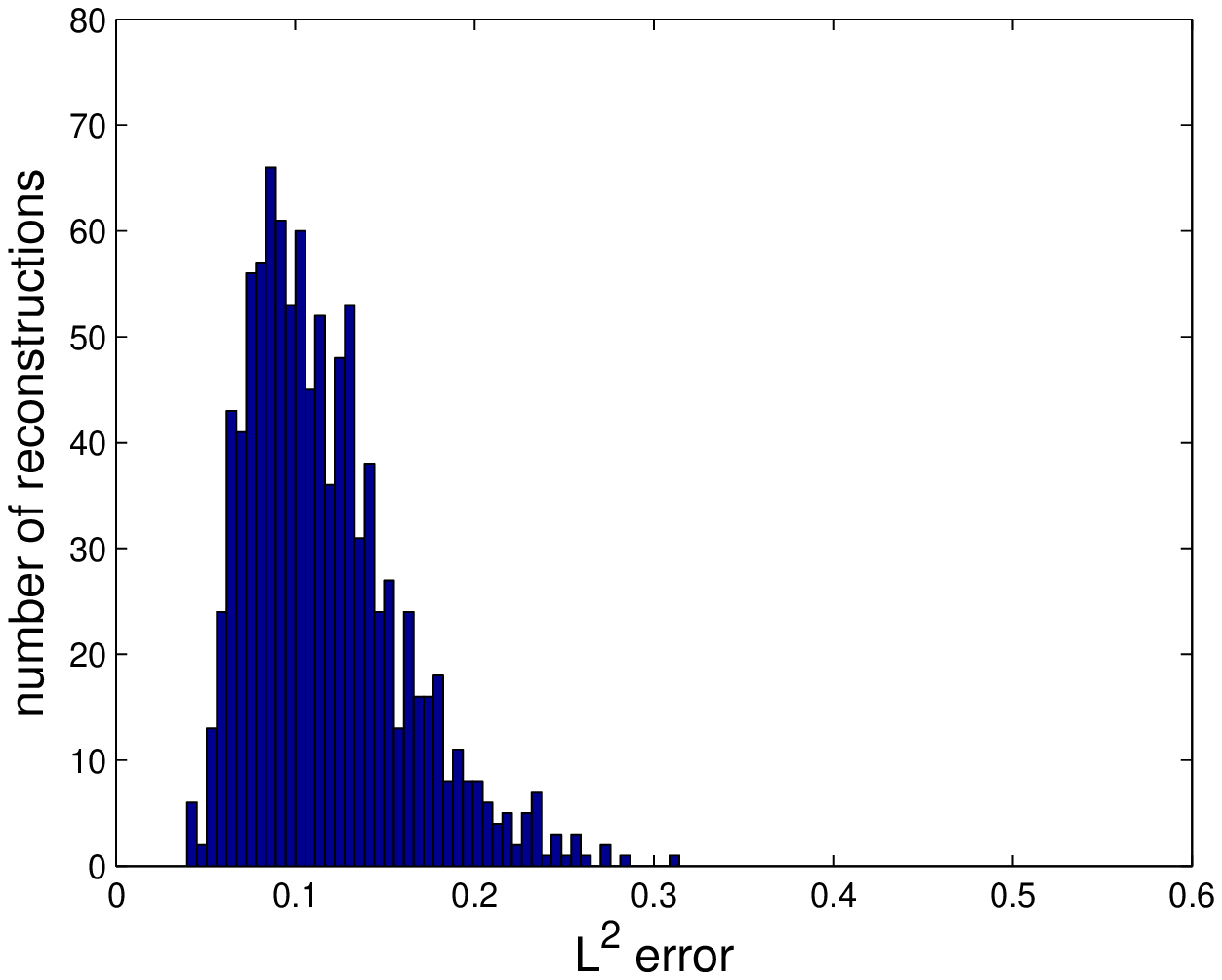}
\includegraphics[width=0.4\textwidth]{./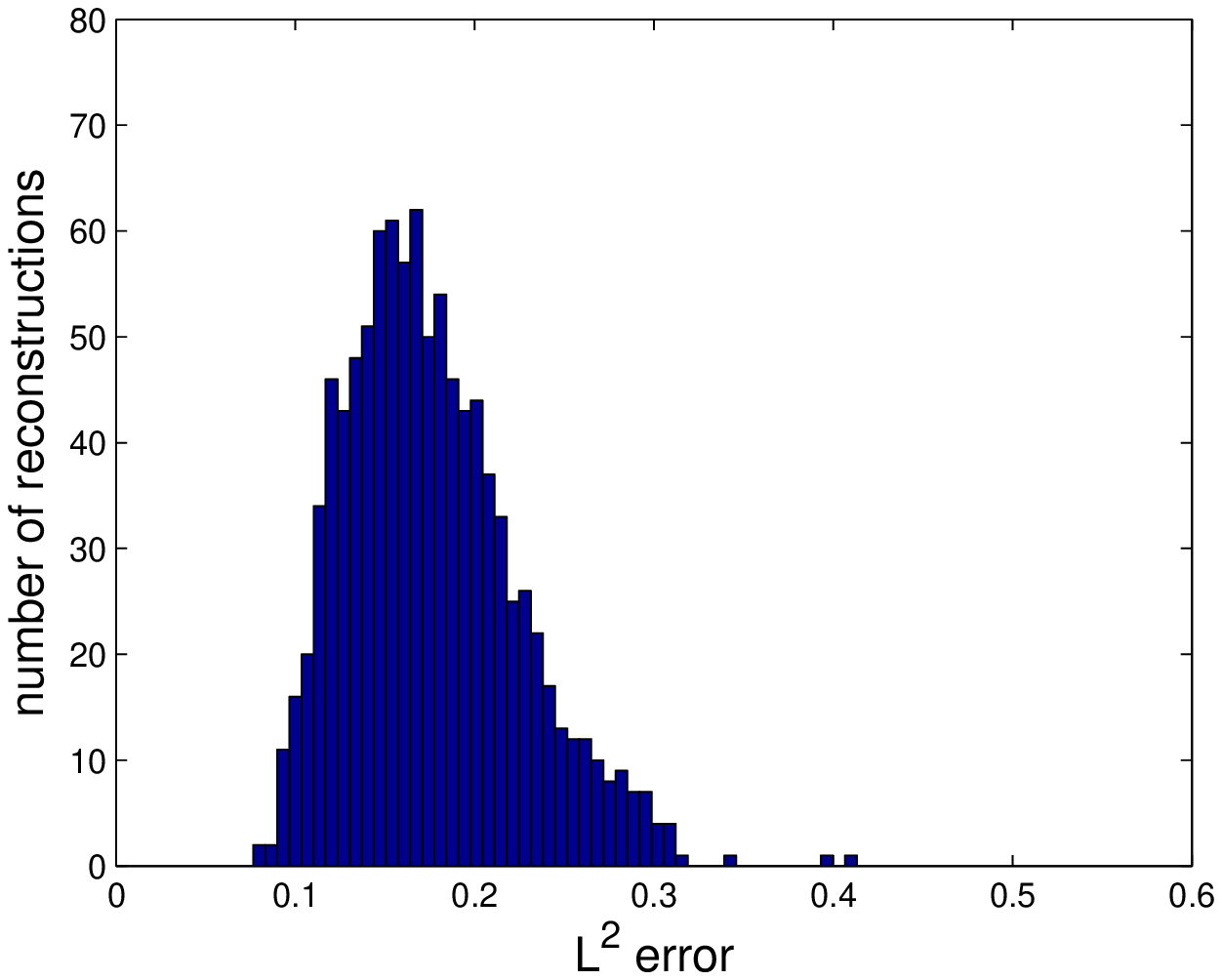} \vspace{-12pt} 
\captionof{figure}{$500$ observations of one path: $L^2$ error of reconstructions with $\KL$ (left) and $L^2$ (right) data fidelity term.}

\includegraphics[width=0.4\textwidth]{./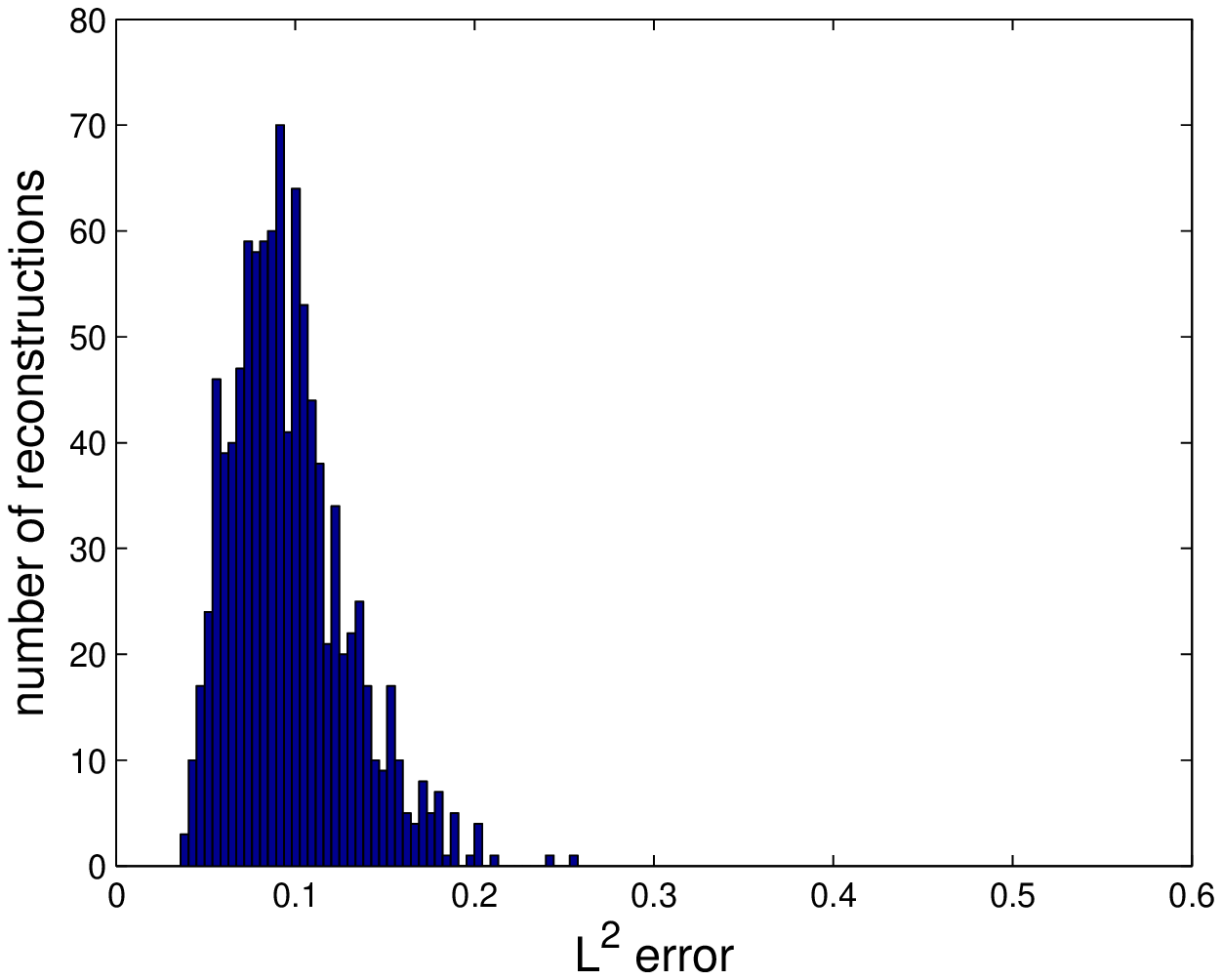}
\includegraphics[width=0.4\textwidth]{./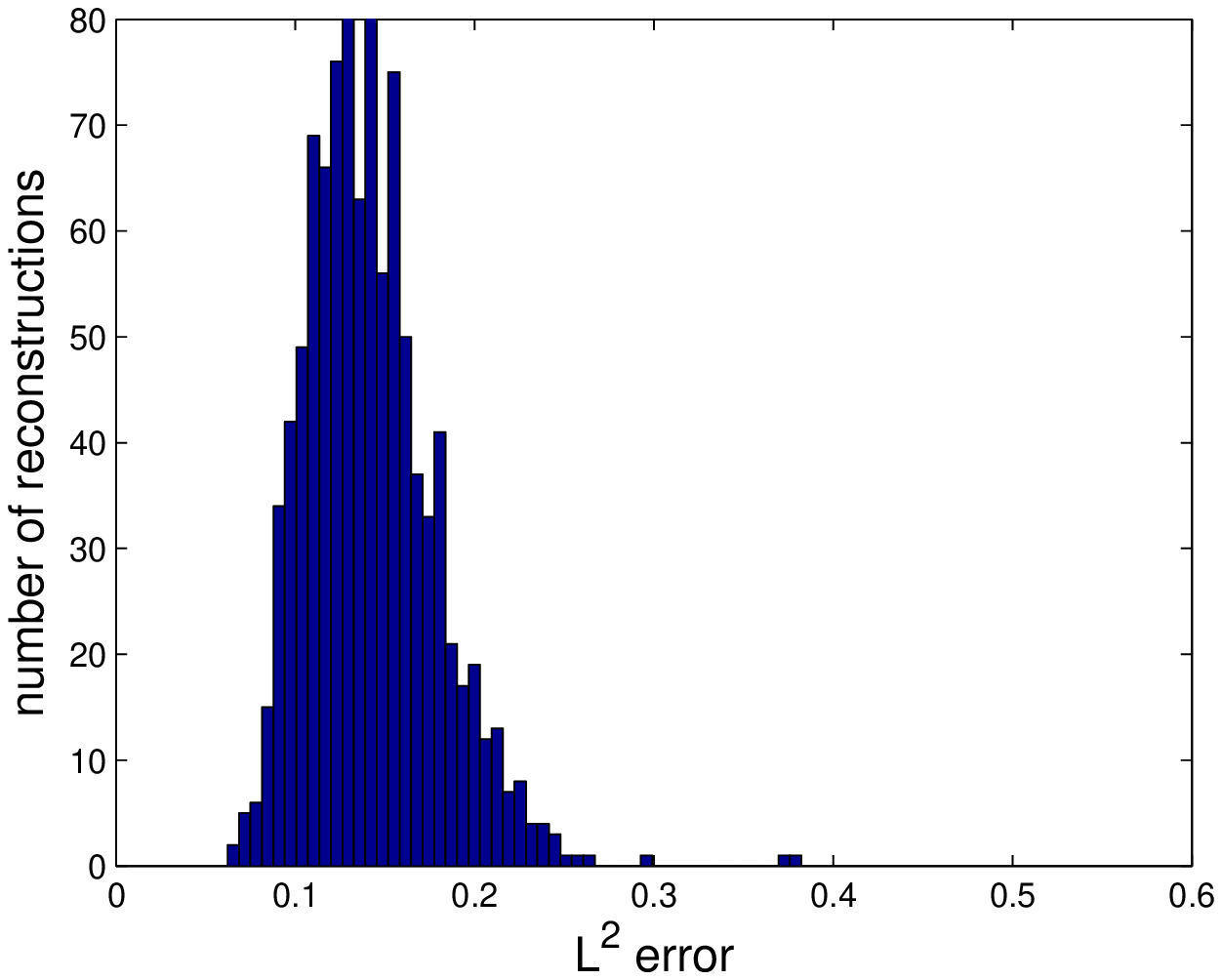} \vspace{-12pt} 
\captionof{figure}{$1000$ observations of one path: $L^2$ error of reconstructions  with $\KL$ (left) and $L^2$ (right) data fidelity term.}
\end{center}

The histograms suggest that the reconstructions with $\KL$-type data fidelity 
term have a smaller mean error and smaller variance. This is made explicit by 
the following table:
\begin{center}
\begin{tabular}{|l||l|l|l|l|}
\hline
observations & $\KL$ mean & $\KL$ variance & $L^2$ mean & $L^2$ variance\\ \hline
$125$        & 0.1832    & 0.0063        & 0.2870     & 0.0093        \\ \hline
$250$        & 0.1439    & 0.0031        & 0.2212     & 0.0044        \\ \hline
$500$        & 0.1160    & 0.0018        & 0.1759     & 0.0023        \\ \hline
$1000$       & 0.0963    & 0.0010        & 0.1417     & 0.0013        \\ \hline
\end{tabular}
\captionof{table}{Mean and variance of the error distributions when one path is observed.}
\end{center}

The following plots are typical reconstructions of the drift using a $\KL$-type data fidelity term. We chose results with a median $L^2$ error for each sample size.
\pagebreak
\begin{center}
\includegraphics[width=0.4\textwidth]{./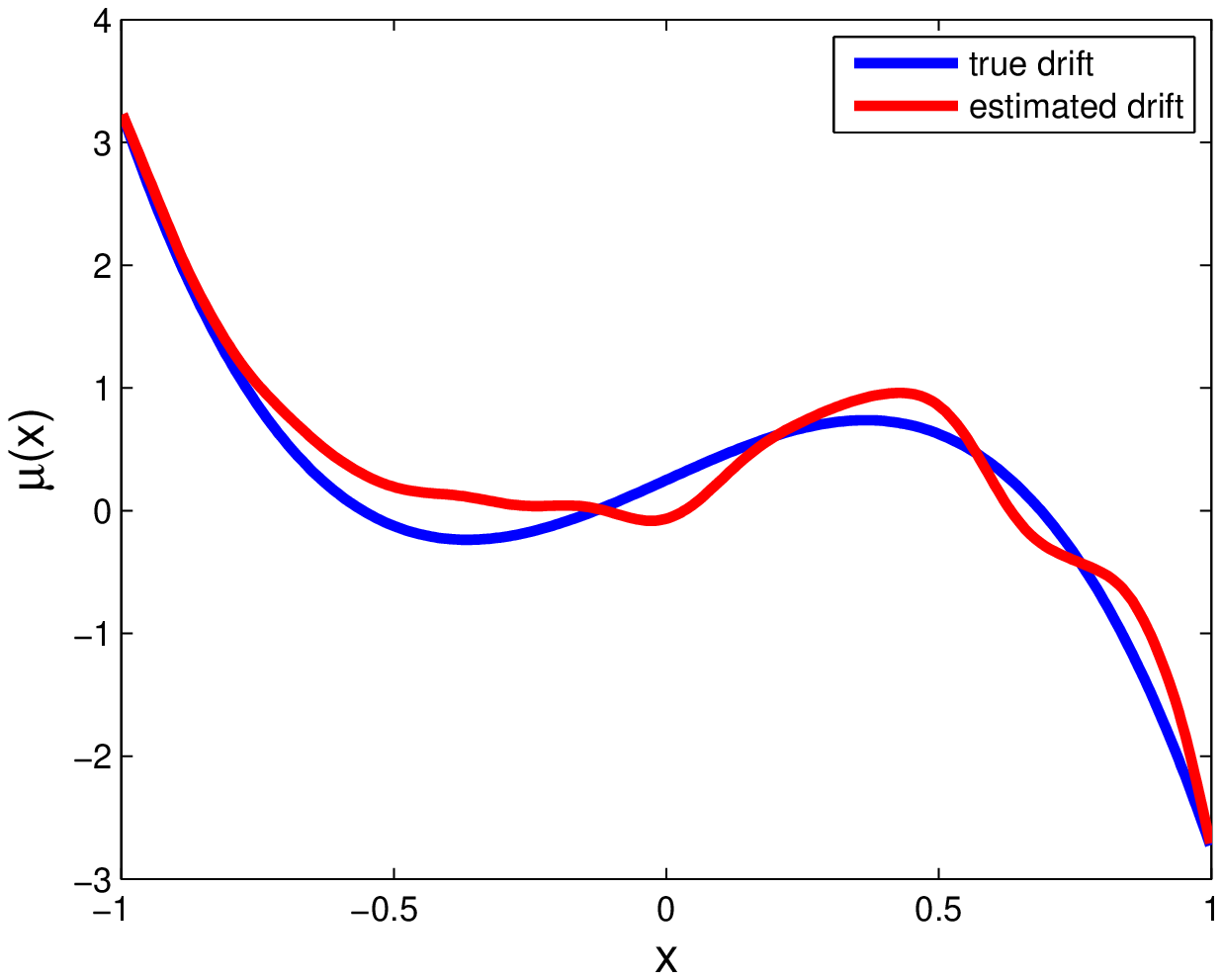} 
\includegraphics[width=0.4\textwidth]{./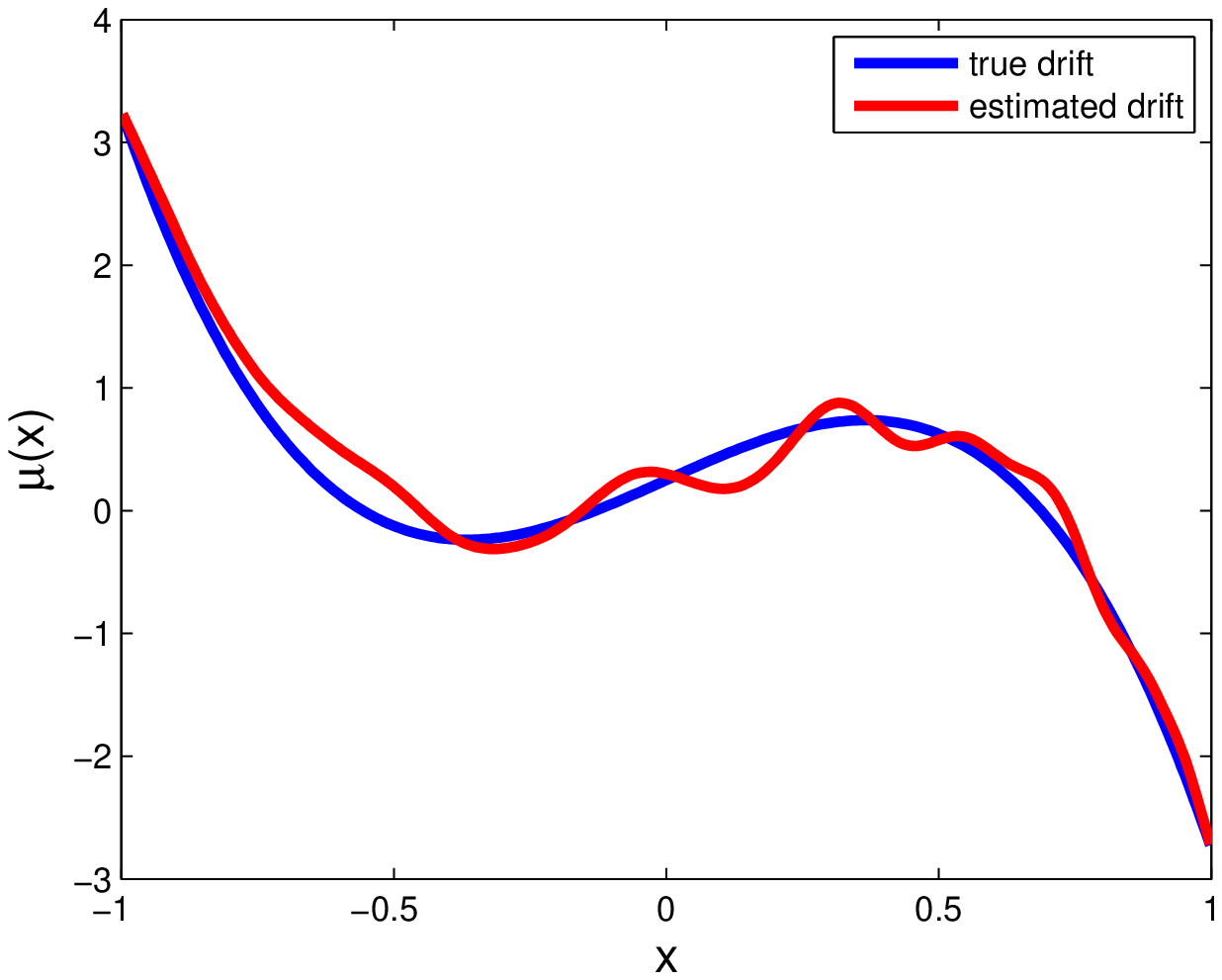} \vspace{-12pt} 
\captionof{figure}{Median reconstructions with $\KL$ data fidelity term using $125$ (left) and $250$ (right) observations of one path.}\label{fig:recon1} 

\includegraphics[width=0.4\textwidth]{./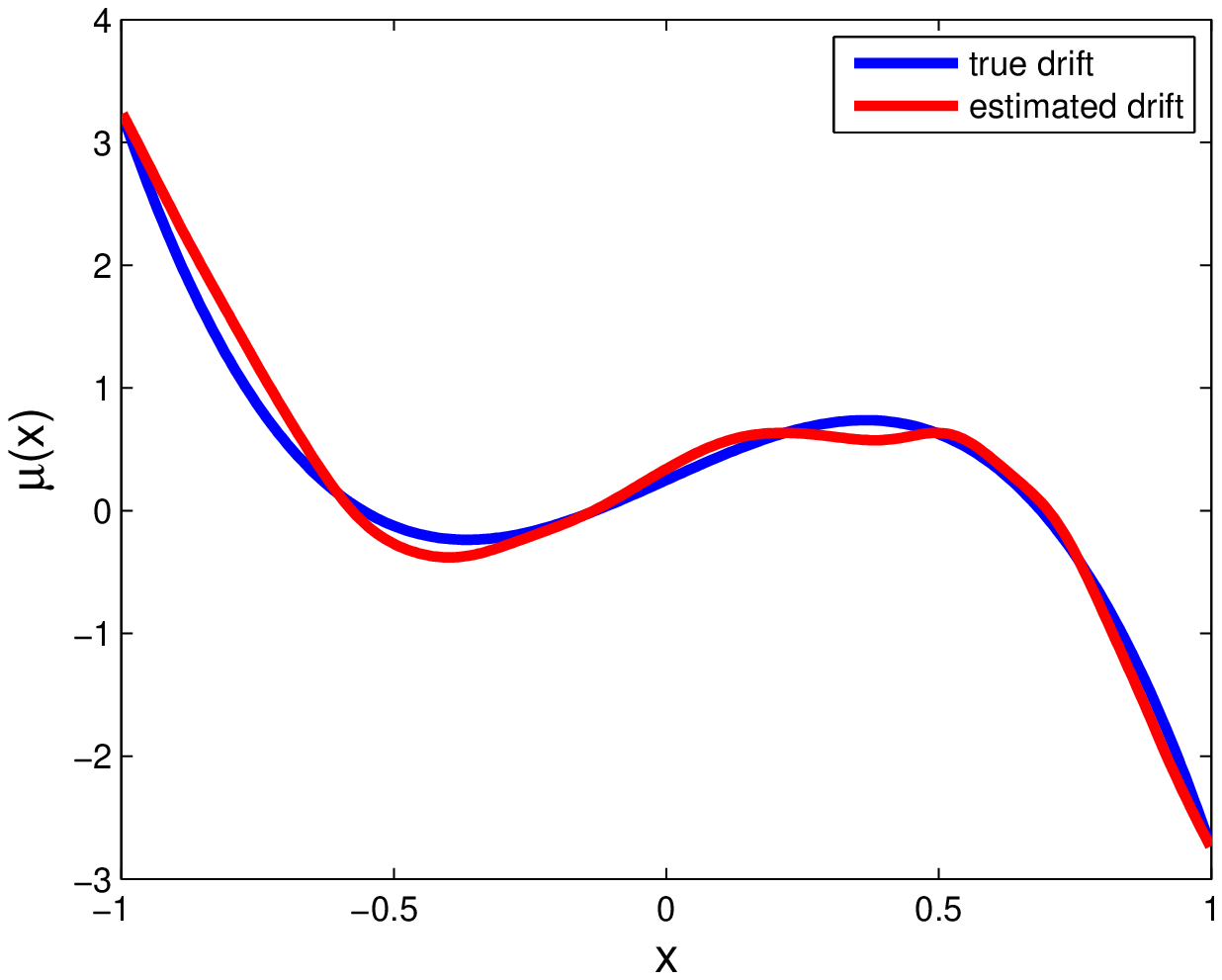} 
\includegraphics[width=0.4\textwidth]{./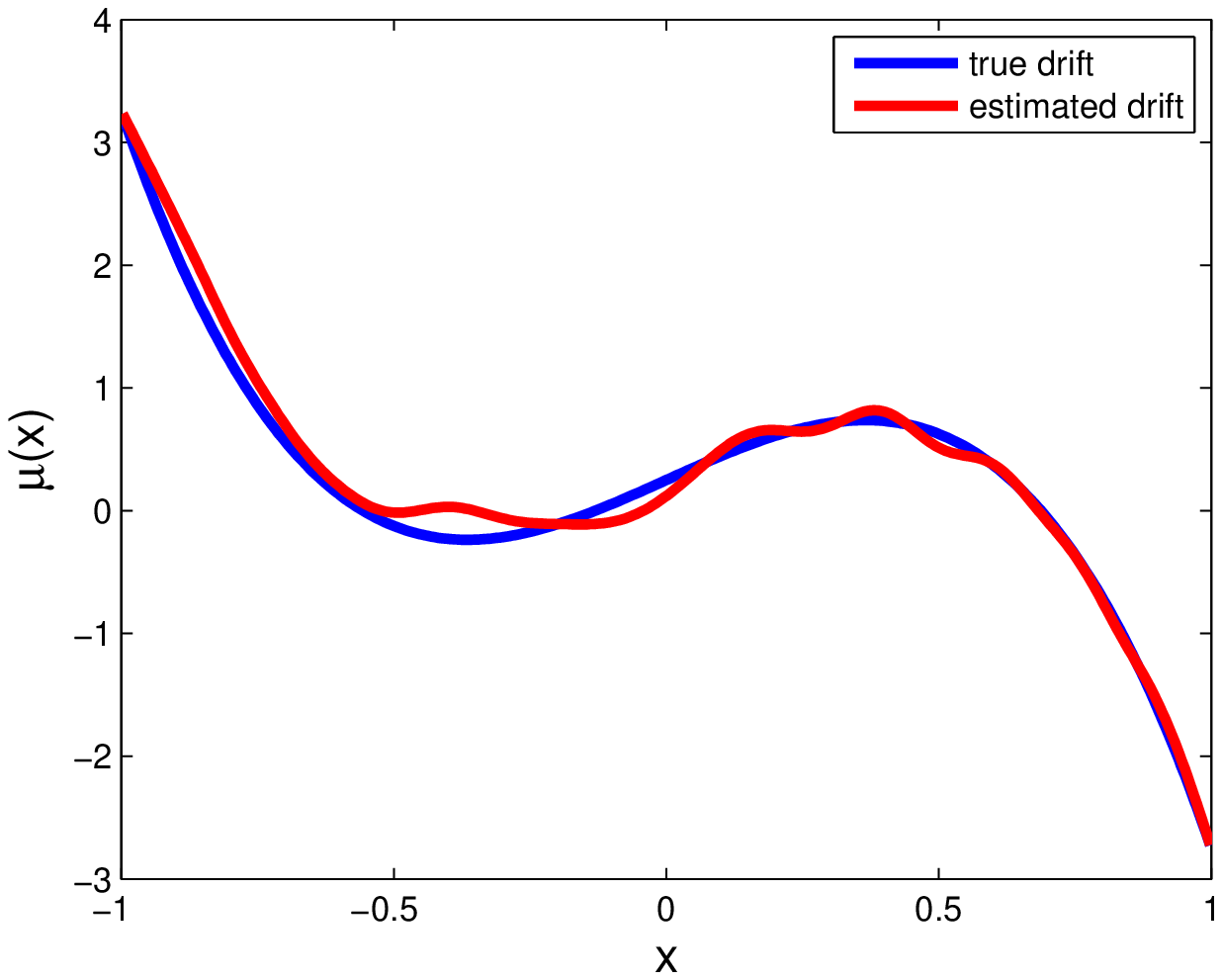} \vspace{-12pt} 
\captionof{figure}{Median reconstructions with $\KL$ data fidelity term using $500$ (left) and $1000$ (right) observations of one path.}\label{fig:recon2}
\end{center}

We summarize that in our numerical simulations the iteratively regularized Newton method with $\KL$-type or with $L^2$ data fidelity term works well as nonparametric estimator of the drift coefficient. Reduction of mean and variance of the $L^2$ error with increasing number of data is observable. The advantage of a $\KL$-type 
data fidelity term is a significantly smaller mean and variance of the $L^2$ error compared to the inversion with $L^2$ data fidelity term.

\subsubsection*{Modifications of the setup.}
In addition to the systematic numerical study above we tested the inversion scheme in two modified setups. The first variation of the setting above is to assume that the 
true values of the drift for $x\geq 1$ and $x\leq-1$ are unknown. 
Naturally, this makes the estimation of the drift close to the boundary more difficult. In addition, observations in this regions are rare in our examples as can be seen in the limit density of the process. Furthermore, the values of the drift at the boundaries are rather large in absolute values which amplifies the problem. The following plots show typical reconstructions in this case.
\begin{center}
\includegraphics[width=0.35\textwidth]{./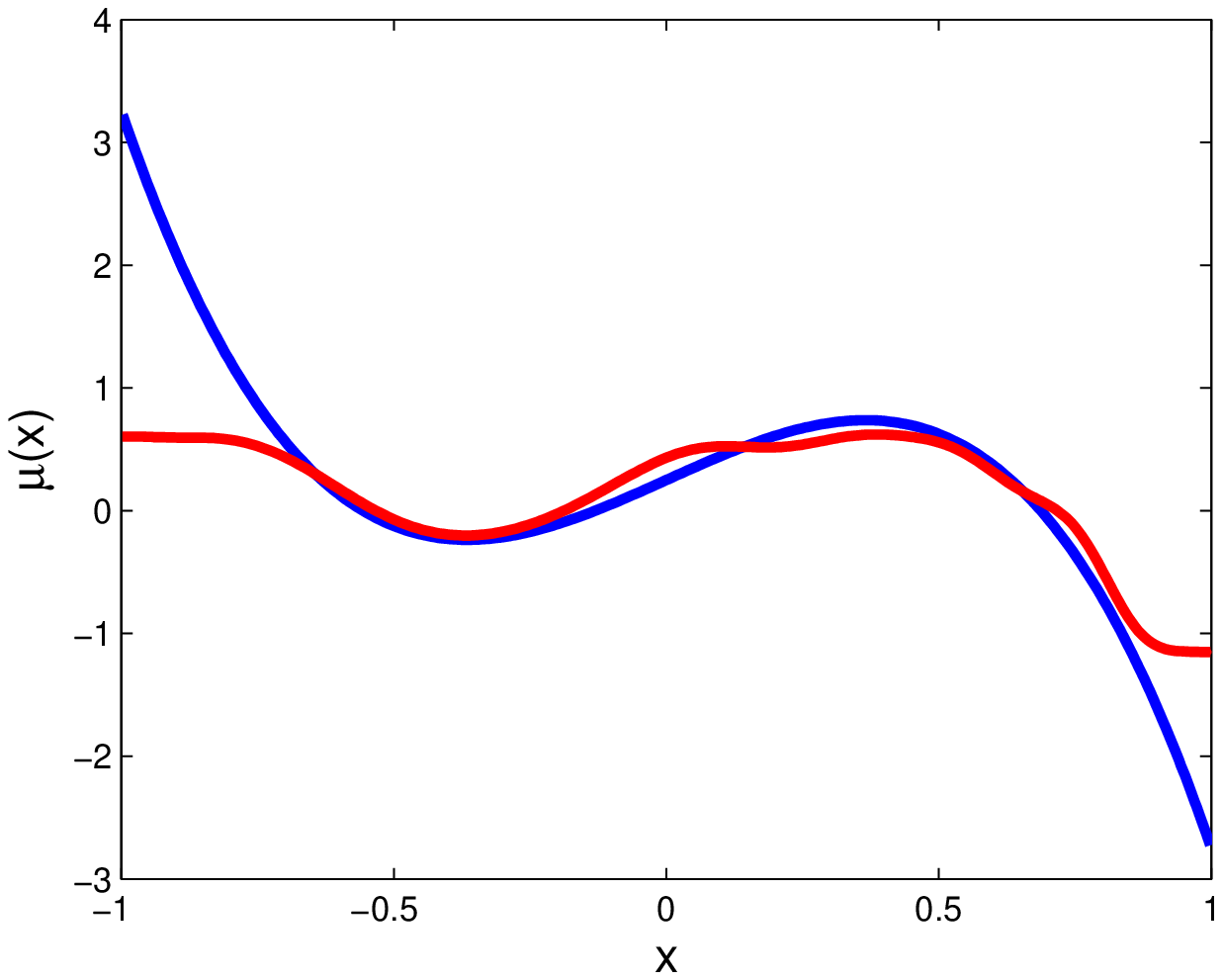}\hspace{-6mm}
\includegraphics[width=0.35\textwidth]{./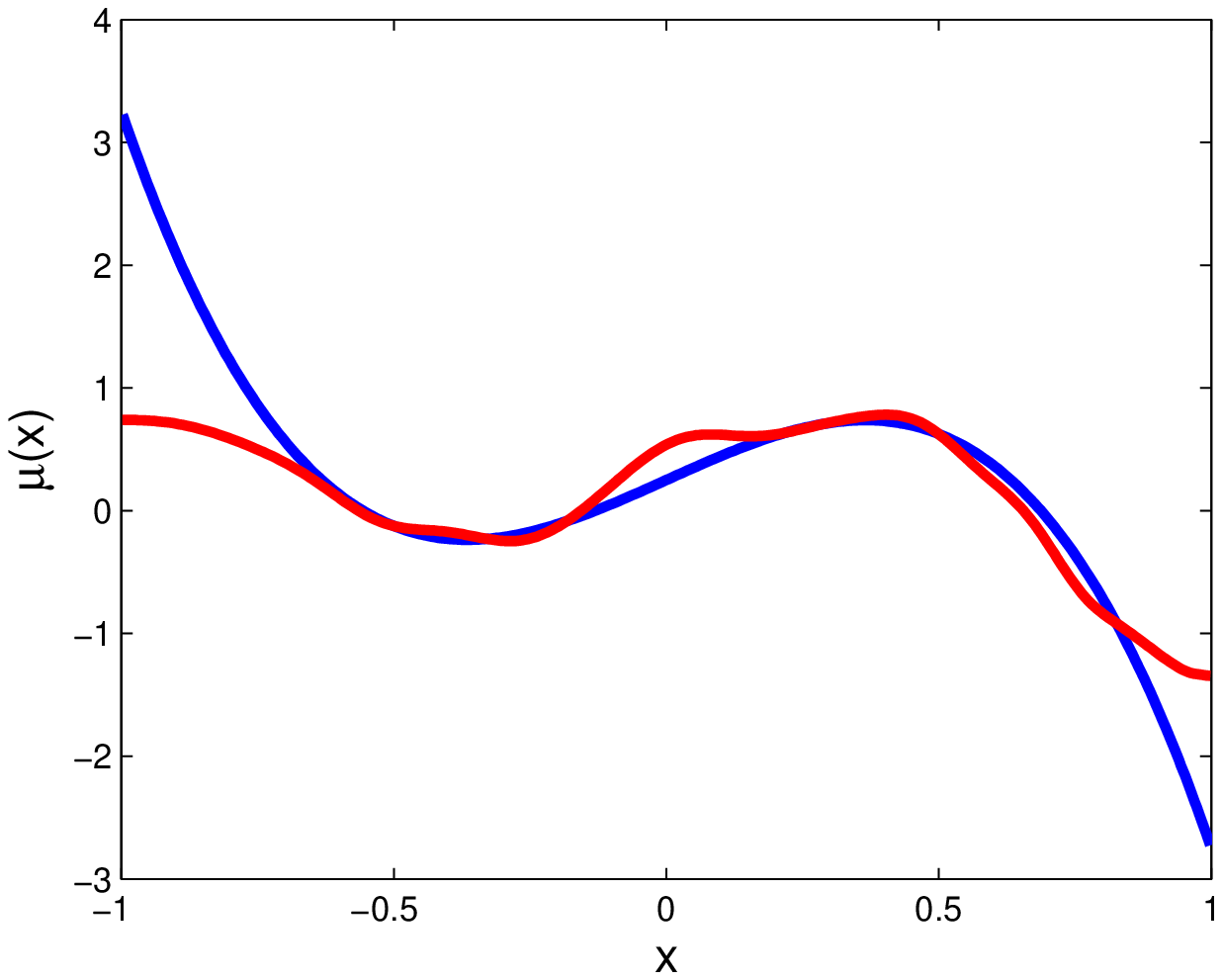} \hspace{-6mm}
\includegraphics[width=0.35\textwidth]{./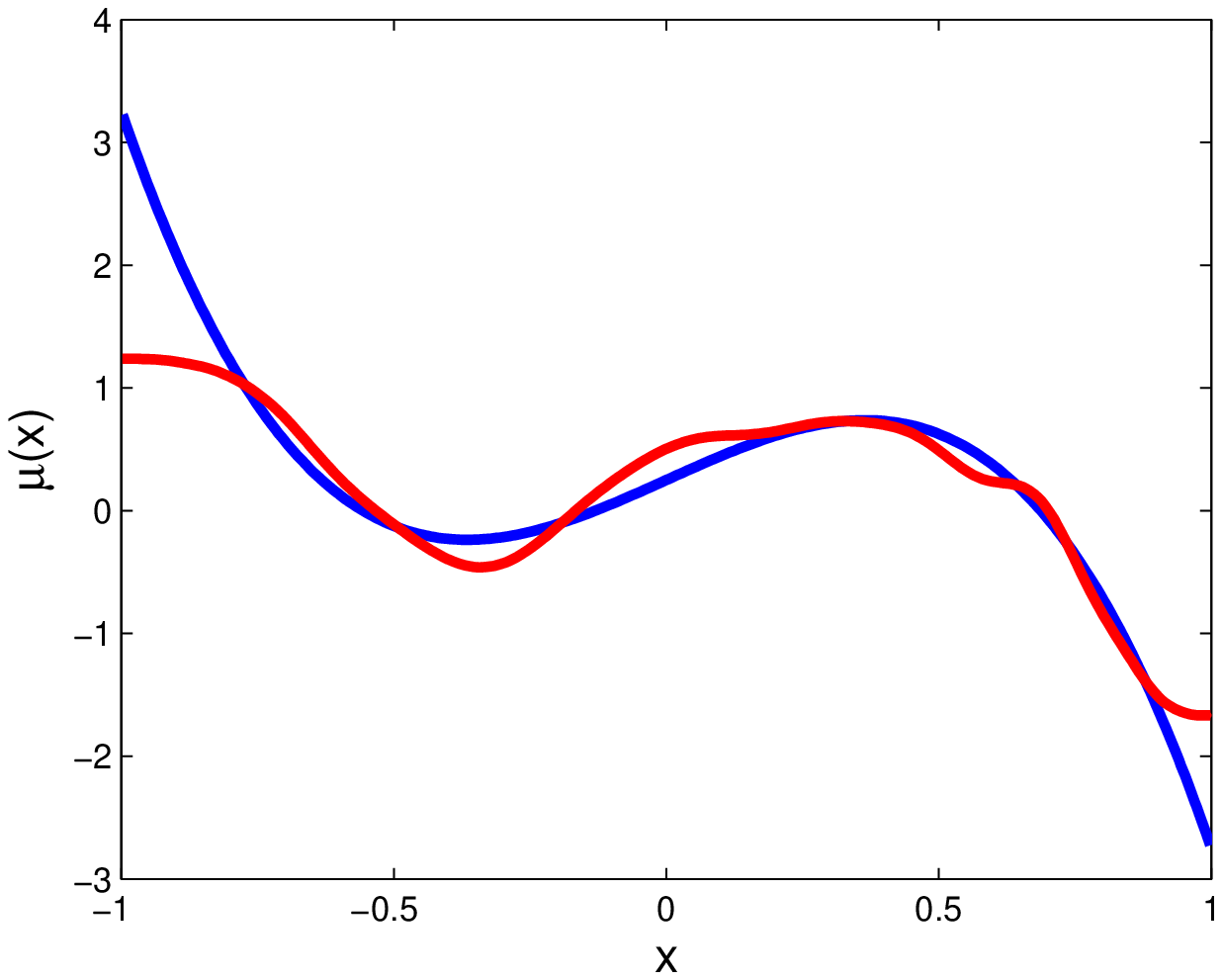}\vspace{-12pt} 
\captionof{figure}{Reconstructions with $\KL$-type data fidelity term using $250$ (left), $500$ (middle), and 1000 (right) observations of one path. red -- reconstruction, blue -- true drift}
\end{center}

As a second modification of the setup we implemented the first scenario discussed in 
the introduction. I.e.\ we simulated many paths with common starting point over 
a smaller period of time instead of simulating one path over a long period of time. 
Each path is observed at one single time point $T$. 
The operator $F$ must be modified for this setting. Instead of solving the 
elliptic problem \eqref{eq:fokker-planck_stationary_unbounded} we have to solve the parabolic problem \eqref{eq:fokker-planck} in each Newton step. We implemented this by finite elements of order three together with an implicit Euler scheme. The following plots show examples for simulated paths on the time interval $[0,1]$, the density of the process $X_t$, and reconstructions of the drift. All paths start at $0$ and observations where made at $T=1$. As above we assumed that the boundary values of the drift are unknown. 
\begin{center}
\includegraphics[width=0.45\textwidth]{./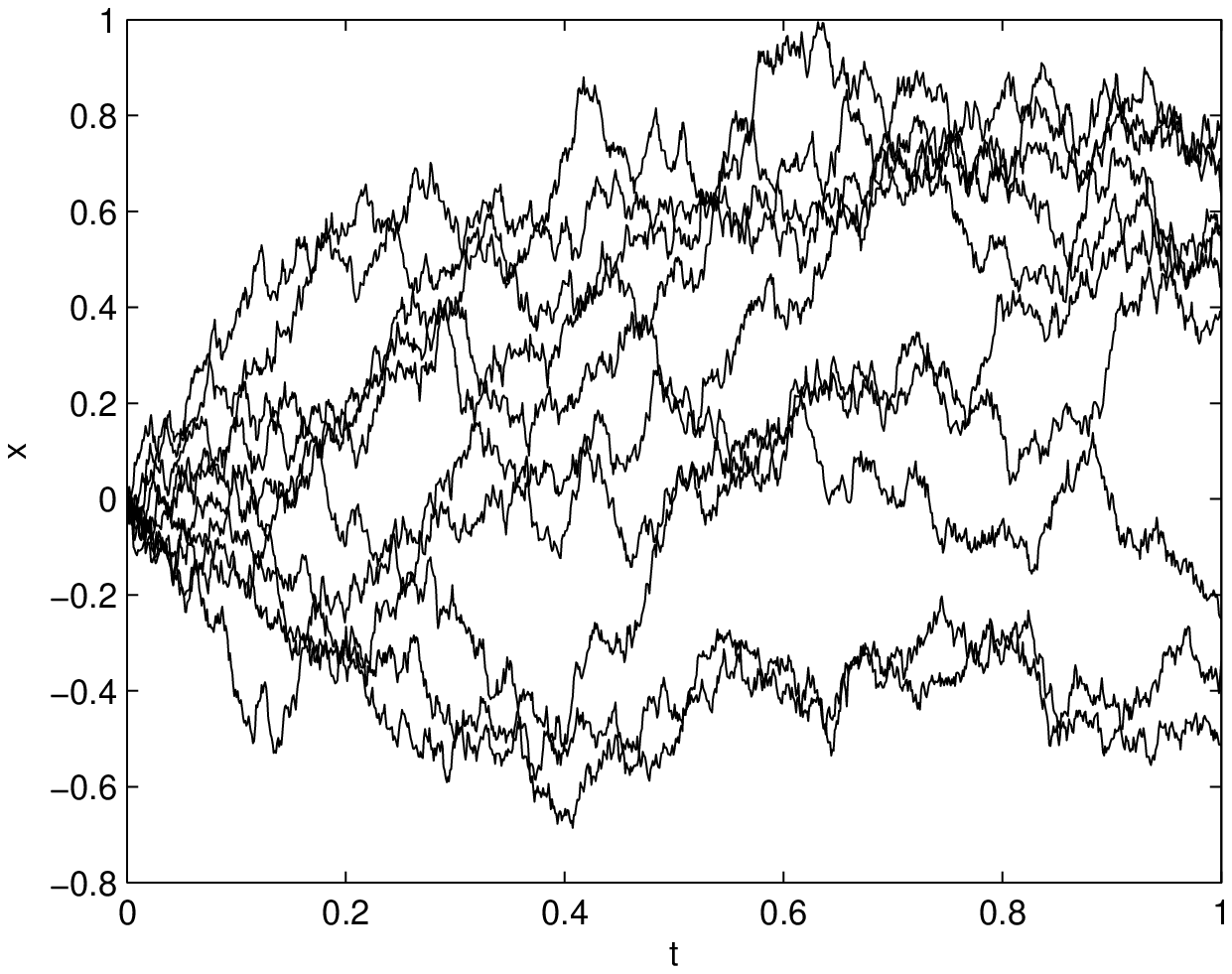}
\includegraphics[width=0.47\textwidth]{./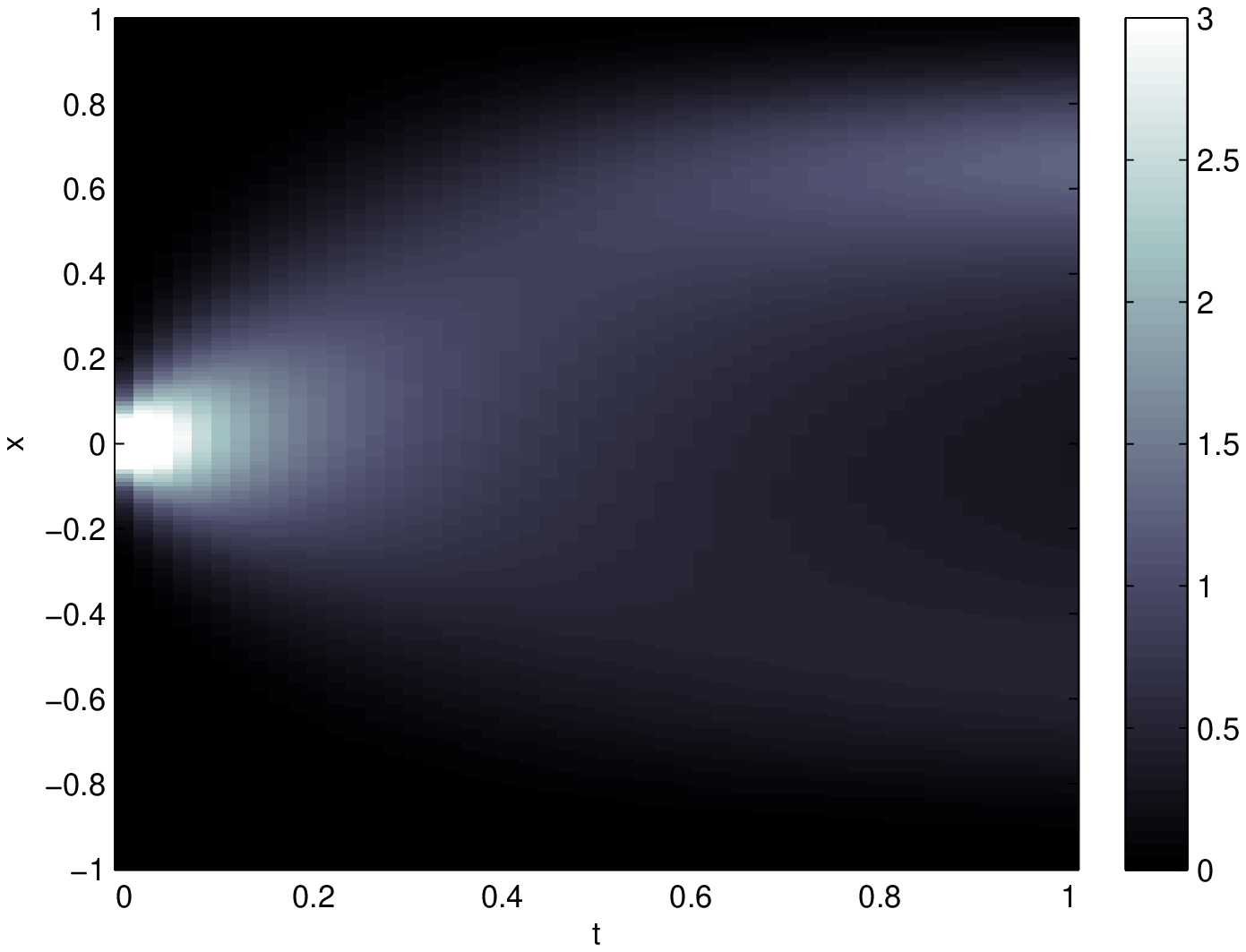}\vspace{-12pt} 
\captionof{figure}{\label{fig:parabolic_illu}10 simulated paths (left), density of $X_t$ (right)} 
\includegraphics[width=0.35\textwidth]{./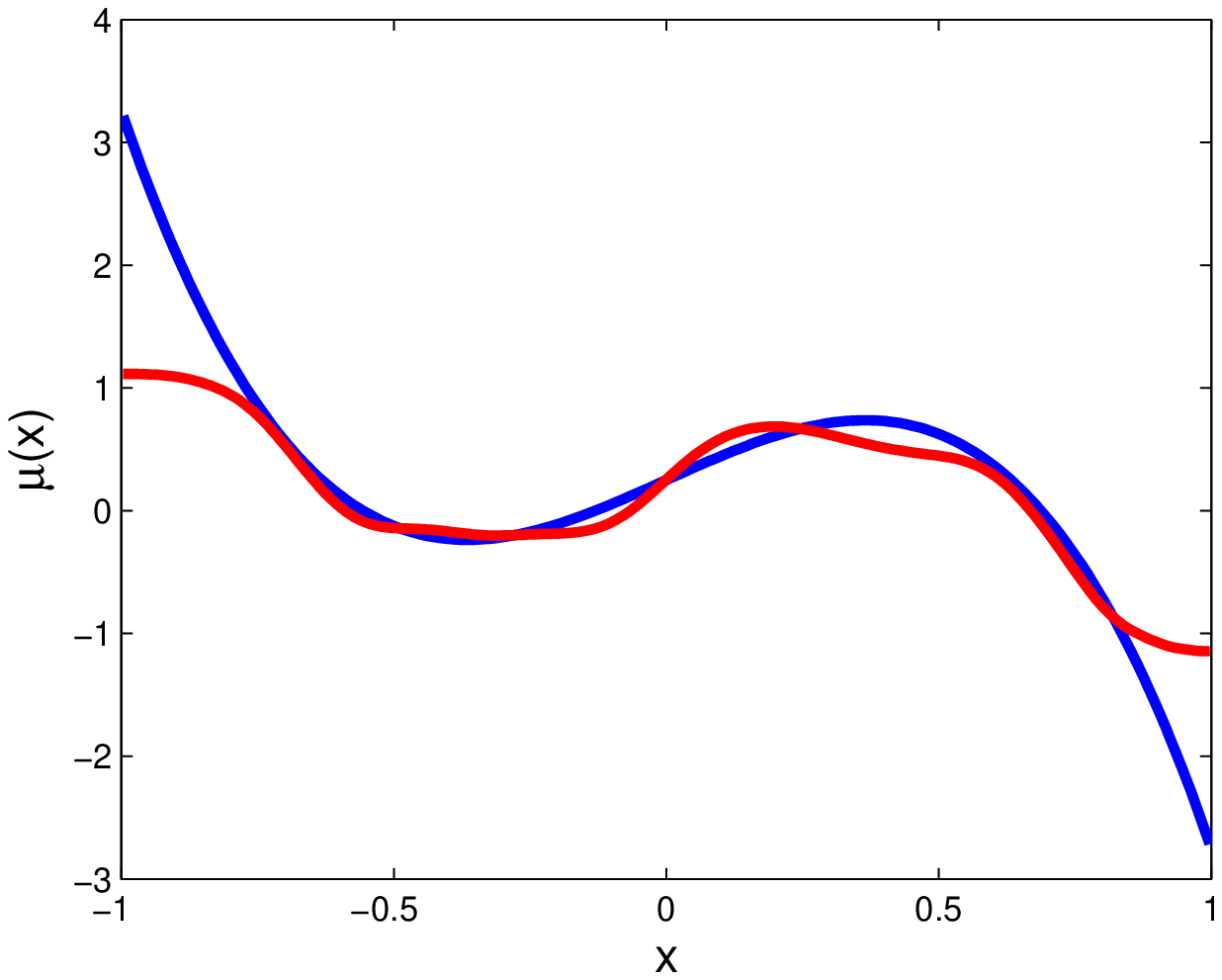}\hspace{-6mm}
\includegraphics[width=0.35\textwidth]{./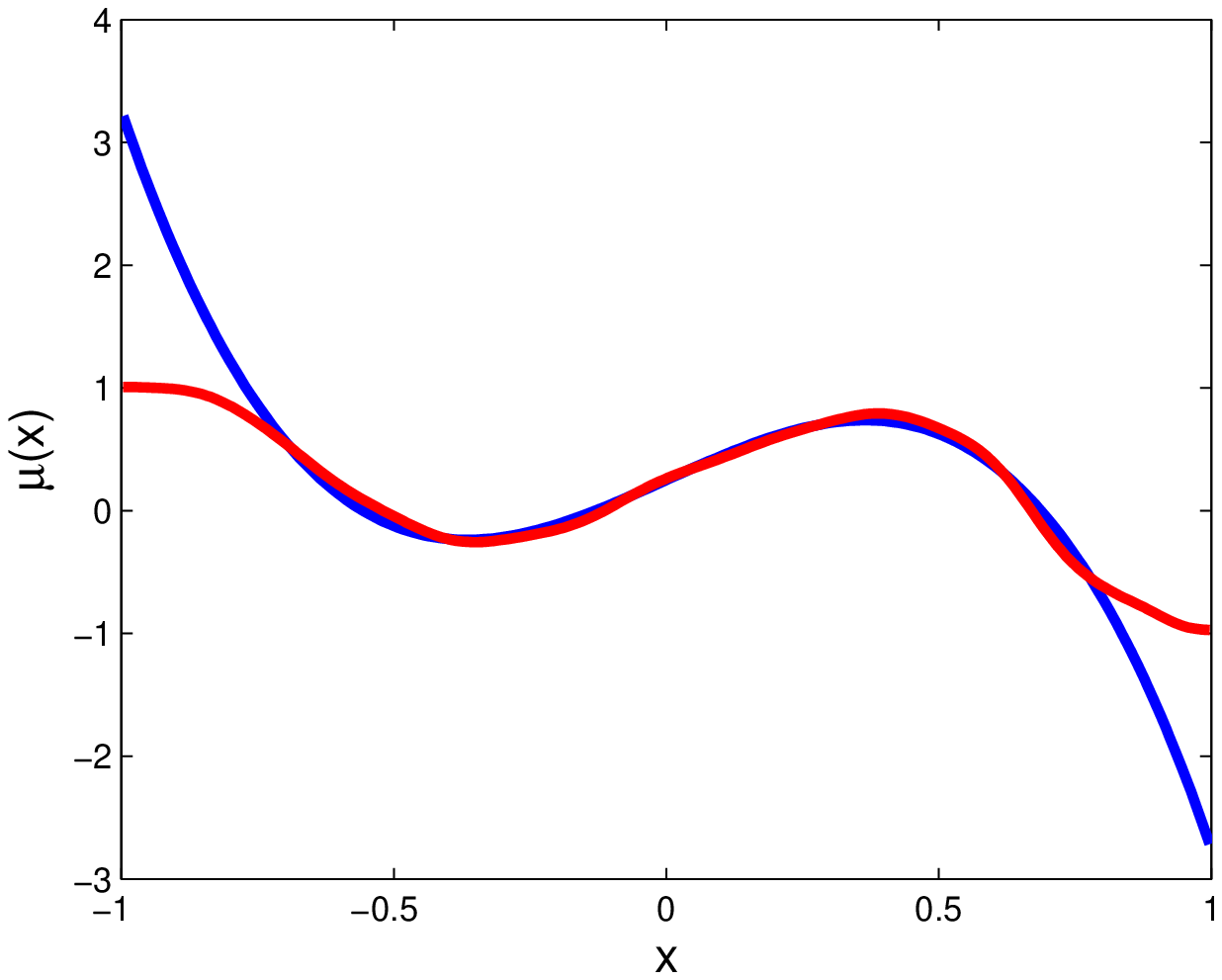} \hspace{-6mm}
\includegraphics[width=0.35\textwidth]{./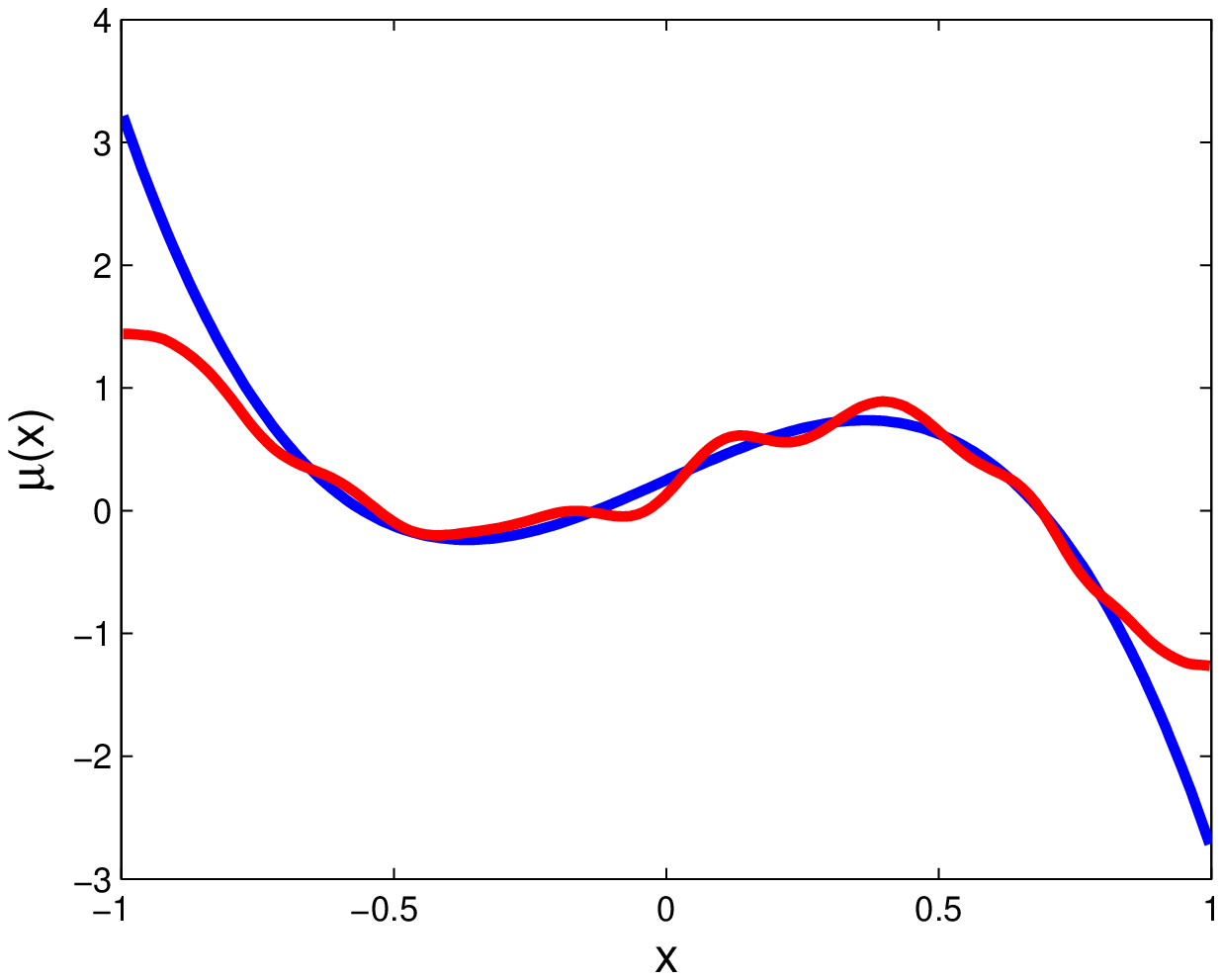}\vspace{-12pt} 
\captionof{figure}{Reconstructions with $\KL$-type data fidelity term using $250$ (left), $500$ (middle), and 1000 (right) simulated paths. red -- reconstruction, blue -- true drift. The setting is illustrated in Fig.~\ref{fig:parabolic_illu}.}
\end{center}
We can conclude that the algorithm works well in the modified setups. The problems with estimation close to the boundary are typical in nonparametric methods. Furthermore, our test examples are particularly prone to these problems. Nevertheless, our algorithm produces good results in the interior of the interval 
in these cases. 

\section{Conclusions}
We presented general convergence rate results for estimating parameters in stochastic 
differential equations by variational regularization methods 
using Kullback-Leibler-type data fidelity terms. Such terms naturally 
appear as negative log-likelihood functionals if the observations 
of paths are described by independent identically distributed random variables. 
An advantage of this approach is its flexibility. For example, it can also be used 
to estimate the volatility, initial conditions or coefficients in boundary conditions, 
and it can handle observations only in part of the domain, observations of many 
paths at many times, and observations of a whole Markov operator. However, in each 
situation the conditions of our convergence theorems have to be checked, which may 
not always be an easy task. 

Here we showed that the assumptions of our general convergence theorems are fulfilled  
for the estimation of the drift in arbitrary space dimensions. A more explicit 
characterization of the conditions for rates of convergence would be desirable, but 
has to be left for future research. 

We demonstrated by Monte-Carlo experiments that Kullback-Leibler-type data fidelity 
terms yield significantly better results than quadratic data fidelity terms. 

\ack
The authors would like to thank Christian Bender and Thomas Schuster for helpful discussions. 
Financial support by German Research Foundation DFG through the German-Swiss Research Group 
FOR 916 is gratefully acknowledged. 

\section*{References}
\bibliography{ref}
\bibliographystyle{abbrv}

\end{document}